\newtheorem{theorem}{Theorem}
\newtheorem{lemma}{Lemma}
\newtheorem{definition}{Definition}
\newtheorem{corollary}{Corollary}
\newtheorem{observation}{Observation}
\begin{document}

\title{Construction of a Byzantine Linearizable SWMR Atomic Register from SWSR Atomic Registers %Byzantine Atomic Register Construction
}

% 1\textsuperscript{st} 2\textsuperscript{nd}

\author{
Ajay D. Kshemkalyani\footnote{University of Illinois Chicago, USA. Email: ajay@uic.edu},
%\author{Ajay D. Kshemkalyani}{University of Illinois at Chicago, USA}{ajay@uic.edu}{}{}
Manaswini Piduguralla\footnote{Indian Institute of Technology Hyderabad, India. Email: cs20resch11007@iith.ac.in},
%\author{Manaswini Piduguralla}{IIT Hyderabad, India}{cs20resch11007@iith.ac.in}{}{}
Sathya Peri\footnote{Indian Institute of Technology Hyderabad, India. Email: sathya\_p@cs.iith.ac.in},
%\author{Sathya Peri}{IIT Hyderabad, India}{sathya_p@cs.iith.ac.in}{}{}
Anshuman Misra\footnote{University of Illinois Chicago, USA.  Email: amisra7@uic.edu}
%\author{Anshuman Misra}{University of Illinois at Chicago, USA}{amisra7@uic.edu}{}{}
}
%\date{}
\maketitle

\begin{abstract}
The SWMR atomic register is a fundamental building block in shared memory distributed systems and implementing it from SWSR atomic registers is an important problem. While this problem has been solved in crash-prone systems, it has received less attention in Byzantine systems. Recently, Hu and Toueg gave such an implementation of the SWMR register from SWSR registers. While their definition of register linearizability is consistent with the definition of Byzantine linearizability of a concurrent history of Cohen and Keidar, it has these drawbacks. (1) If the writer is Byzantine, the register is linearizable no matter what values the correct readers return. (2) It ignores values written consistently by a Byzantine writer. We need a stronger notion of a {\em correct write operation}. (3) It allows a value written to just one or a few readers' SWSR registers to be returned, thereby not validating the intention of the writer to write that value honestly. (4) Its notion of a  ``current'' value returned by a correct reader is not related to the most recent value written by a correct write operation of a Byzantine writer. %that immediately precedes the read operation. 
We need a more up to date version of the value that can be returned by a correct reader. 
In this paper, we give a stronger definition of a Byzantine linearizable register that overcomes the above drawbacks. Then we give a construction of a Byzantine linearizable SWMR atomic register from SWSR registers that meets our stronger definition. The construction is correct when $n>3f$, where $n$ is the number of readers, $f$ is the maximum number of Byzantine readers, and the writer can also be Byzantine. The construction relies on a public-key infrastructure.
\end{abstract}

\begin{comment}
\begin{CCSXML}
<ccs2012>
   <concept>
       <concept_id>10003752.10003809.10010172</concept_id>
       <concept_desc>Theory of computation~Distributed algorithms</concept_desc>
       <concept_significance>500</concept_significance>
       </concept>
   <concept>
       <concept_id>10003752.10003809.10011778</concept_id>
       <concept_desc>Theory of computation~Concurrent algorithms</concept_desc>
       <concept_significance>500</concept_significance>
       </concept>
   <concept>
       <concept_id>10010520.10010575</concept_id>
       <concept_desc>Computer systems organization~Dependable and fault-tolerant systems and networks</concept_desc>
       <concept_significance>300</concept_significance>
       </concept>
 </ccs2012>
\end{CCSXML}

\ccsdesc[500]{Theory of computation~Distributed algorithms}
\ccsdesc[500]{Theory of computation~Concurrent algorithms}
\ccsdesc[300]{Computer systems organization~Dependable and fault-tolerant systems and networks}

%{\bf Keywords:} 
\keywords{Byzantine fault tolerance, SWMR atomic register, Linearizability, SWSR register}
\end{comment}

{\bf Keywords: Byzantine fault tolerance, SWMR atomic register, Linearizability, SWSR register}

%\maketitle

%\input{defalgo}
%\input{BRB}
%\input{ImbsRaynal}
%\input{algorithmIR}
%\input{algoIR0}
%\input{algoIR1}
%\input{algoIR2}
%\input{algoIR3}
%\input{algoIR4}
%Feb 01, 2024
\section{Introduction}
\label{section:intro}

Implementing shared registers from weaker types of registers is a fundamental problem in distributed systems and has been extensively studied \cite{DBLP:journals/jacm/AttiyaBD95,DBLP:conf/podc/BurnsP87,DBLP:journals/jacm/HaldarV95,DBLP:conf/podc/IsraeliS92,DBLP:journals/toplas/Herlihy91,DBLP:journals/toplas/HerlihyW90,DBLP:journals/dc/Lamport86,DBLP:journals/dc/Lamport86a,DBLP:journals/toplas/Peterson83,DBLP:conf/focs/PetersonB87,DBLP:conf/podc/SinghAG87,DBLP:conf/podc/Newman-Wolfe87,DBLP:journals/ipl/Vidyasankar88,DBLP:journals/ipl/Vidyasankar91,DBLP:conf/focs/VitanyiA86}.  We consider the problem of implementing a single-writer multi-reader register (SWMR) from single-writer single-reader (SWSR) registers in a system with Byzantine processes. This SWMR register in a Byzantine setting is of great importance in recent research.
For example, Mostefaoui et al. \cite{DBLP:journals/mst/MostefaouiPRJ17} prove that in message-passing systems with Byzantine failures, there is a $f$-resilient implementation of a SWMR register if and only if $f<n/3$ processes are faulty, where $f$ is the number of Byzantine processes and $n$ is the total number of processes. It was the first to give the definition of a linearizable SWMR register in the presence of Byzantine processes and \cite{DBLP:conf/wdag/CohenK21} generalized it to objects of any type. Aguilera et al. \cite{DBLP:conf/podc/AguileraBGMZ19} use atomic SWMR registers to solve some agreement problems in hybrid systems subject to Byzantine process failures. Cohen and Keidar \cite{DBLP:conf/wdag/CohenK21} give $f$-resilient implementations of three objects -- asset transfer, reliable broadcast, atomic snapshots -- using atomic SWMR registers in systems with Byzantine failures where at most $f<n/2$ processes are faulty. Their implementations were based on their definition of Byzantine linearizability of a concurrent history.

In other related work, a SWMR register was built above a message-passing system where processes communicate using send/receive primitives with the constraint that 
%the number of Byzantine processes 
$f<n/3$
%where $n$ is the total number of processes 
\cite{DBLP:journals/jpdc/ImbsRRS16,DBLP:journals/mst/MostefaouiPRJ17}. These works do not use signatures. Unbounded history registers were required in \cite{DBLP:journals/jpdc/ImbsRRS16} whereas \cite{DBLP:journals/mst/MostefaouiPRJ17} used $O(n^2)$ messages per write operation. Although building SWMR registers over SWSR registers or over message-passing systems is equivalent as SWSR registers can be emulated over send/receive and vice versa, this is a round-about and expensive solution. A similar problem for the client-server paradigm in message-passing systems was solved in \cite{DBLP:conf/srds/MalkhiR98} using cryptography. %Our definition of a Byzantine linearizable register is stronger than not just that of \cite{DBLP:conf/wdag/CohenK21,DBLP:conf/wdag/0009T22} but also that of \cite{DBLP:journals/jpdc/ImbsRRS16,DBLP:journals/mst/MostefaouiPRJ17,DBLP:conf/srds/MalkhiR98}. Further, we are interested in implementing the SWMR register over SWSR registers directly in the shared memory model.

\subsection{Motivation}
\label{section:motivation}
The SWMR atomic register is seen to be a basic building block in shared memory distributed systems and implementing it from SWSR atomic registers is an important problem. While this problem has been solved in crash-prone systems, it has received recent attention in Byzantine systems. Recently, Hu and Toueg gave such an implementation of the SWMR register from SWSR registers \cite{DBLP:conf/wdag/0009T22}. While their definition of register linearizability is consistent with the definition of Byzantine linearizability of a concurrent history of Cohen and Keidar \cite{DBLP:conf/wdag/CohenK21}, both \cite{DBLP:conf/wdag/0009T22,DBLP:conf/wdag/CohenK21} as well as \cite{DBLP:journals/jpdc/ImbsRRS16,DBLP:journals/mst/MostefaouiPRJ17} have the following drawbacks. 
\begin{enumerate}
    \item If the writer is Byzantine, the register is vacuously linearizable no matter what values the correct readers return. Reads by correct processes can return any value whatsoever including the initial value while the register meets their definition of linearizability. In particular, there is no view consistency. For example, in the Hu-Toueg algorithm, consider a scenario where a Byzantine writer writes a different data value associated with the same counter value to the various readers' SWSR registers. The correct readers will return different data values associated with the same counter value, thus having inconsistent views. An example application where this is a problem is collaborative editing for a document hosted on a single server. Another reason why this is problematic is that it violates the agreement clause of the well-known consensus/Byzantine agreement problem, which requires that all non-faulty processes must agree on the same value even if the source is Byzantine. We require view consistency.
    \item Their definition of register linearizability does not factor in, or ignores, those values written by a Byzantine writer, by honestly following the writer protocol for those values. We need a stronger notion of a {\em correct write operation} that factors in such values as being written correctly. Also, note that the Byzantine writer is in control of the execution both above and below the SWMR register interface and hence the value that it writes in a correct write operation can be assumed to be the value intended to be written (correctly) and not altered by Byzantine behavior.
    \item Their definition of register linearizability allows a value written by a Byzantine writer to just a single reader's SWSR register to be returned by a correct process. In order to validate that the writer intended to write that value honestly, we would like a minimum threshold number of readers' SWSR registers to be written that same value to enable that value to become eligible for being returned to a correct reader. This validates the intention of the Byzantine writer to write that particular value. 
    \item In their definition of register linearizability, their notion of a ``current'' value returned by a correct reader is not related to the most recent value written by a correct write operation of a Byzantine writer. 
    %that immediately precedes the read operation. 
    We need a more up to date version of the value that can be returned by a correct reader. This helps give a stronger guarantee of progress from the readers' perspective.
\end{enumerate}
Our definition of a Byzantine linearizable register is stronger than not just that of \cite{DBLP:conf/wdag/CohenK21,DBLP:conf/wdag/0009T22} but also that of \cite{DBLP:journals/jpdc/ImbsRRS16,DBLP:journals/mst/MostefaouiPRJ17,DBLP:conf/srds/MalkhiR98} and overcomes the above drawbacks. Further, we are interested in implementing the SWMR register over SWSR registers directly in the shared memory model.

\subsection{Contributions}
\begin{enumerate}
\item In this paper, we give a stronger definition of a {\em Byzantine linearizable register} that overcomes all the above drawbacks of \cite{DBLP:conf/wdag/CohenK21,DBLP:conf/wdag/0009T22} and \cite{DBLP:journals/jpdc/ImbsRRS16,DBLP:journals/mst/MostefaouiPRJ17,DBLP:conf/srds/MalkhiR98}.
We introduce the concept of a {\em correct write operation} by a Byzantine writer as one that conforms to the write protocol. We also introduce the notion of a {\em pseudo-correct write operation} by a Byzantine writer, which has the effect of a correct write operation. %As pseudo-correct write operations are composed of Byzantine actions, the set of pseudo-correct write operations is totally ordered based on some total order (in logical time) in which the operations were performed.
%As pseudo-correct write operations are composed of Byzantine steps, the set of pseudo-correct write operations is partially ordered based on the order in which the operations were performed. 
Only correct and pseudo-correct writes may be returned by correct readers. The correct and pseudo-correct writes are totally ordered 
%%%%%%%%%%%%%%%%%%%%%%%%%%%%%%%%%%
%by their linearization points, 
%%%%%%%%%%%%%%%%%%%%%%%%%%%%%%%%%%%
and this order %which order respects some total order 
is the total order in logical time \cite{Mattern88virtualtime,DBLP:journals/computer/RaynalS96}
%%the partial order
in which the writes are performed. 
\item Then we give a construction of a Byzantine linearizable SWMR atomic register from SWSR atomic registers that meets our stronger definition. The construction is correct when $n>3f$, where $n$ is the number of readers, $f$ is the maximum number of Byzantine readers, and the writer can also be Byzantine. The construction relies on a public-key infrastructure (PKI). 

The construction develops the idea of the readers issuing logical read timestamps to the values set aside for them by the writer. Logical global states on the readers' SWSR registers, akin to consistent cuts in message-passing systems \cite{Mattern88virtualtime}, are then constructed. The algorithm logic ensures that values read at/along such global states form a total order, thereby helping to ensure Byzantine register linearizability.

As compared to the algorithm in \cite{DBLP:conf/wdag/0009T22} which can tolerate any number of Byzantine readers, our algorithm requires $f<n/3$. Also, in the algorithm in \cite{DBLP:conf/wdag/0009T22}, a reader that stops reading also stops taking implementation steps whereas our algorithm requires a reader helper thread to take infinitely many steps even if it has no read operation to apply. The algorithm in \cite{DBLP:conf/wdag/0009T22} as well as our algorithm use a PKI.
\end{enumerate}

\noindent{{\bf Outline:}} Section~\ref{section:modelprelim} gives the system model and preliminaries. Section~\ref{section:characterization} gives our characterization of a Byzantine linearizable register based on logical time and culminates in the definition of such a register. Section~\ref{section:algorithm} gives our construction of the SWMR Byzantine linearizable register using SWSR registers. Section~\ref{section:proof} gives the correctness proof. Section~\ref{section:conc} concludes.
%\section{Related Work}

\section{Model and Preliminaries}
\label{section:modelprelim}
\subsection{Model Basics}
We consider the shared memory model of a distributed system. The system contains a set $P$ of asynchronous processes. These processes access some shared memory objects. All inter-process communication is done through an API exposed by the objects. Processes invoke operations that return some response to the invoking process. We assume reliable shared memory but allow for an adversary to corrupt up to $f$ processes in the course of a run. A corrupted process is defined as being {\em Byzantine} and such a process may deviate arbitrarily from the protocol. A non-Byzantine process is {\em correct} and such a process follows the protocol and takes infinitely many steps.

We also assume a %public key infrastructure (
PKI. %). 
Using this, each process has a public-private key pair used to sign data and verify signatures of other processes. A values $v$ signed by process $p$ is denoted $\langle v\rangle_p$.

We give an algorithm that emulates an object $O$, viz., a SWMR register from SWSR registers. We assume that there is adequate access control such that a SWSR register can be accessed only by the single writer and the single reader between whom the register is set up, and that another (Byzantine) process cannot access it. The algorithm is organized as methods of $O$. A method execution is a sequence of steps. It begins with the {\em invoke} step, goes through steps that access lower-level objects, viz., SWSR registers, and ends with a {\em return} step. The invocation and response delineate the method's execution interval. In an {\em execution} $\sigma$, each correct process invokes methods sequentially, and steps of different processes are interleaved. Byzantine processes take arbitrary steps irrespective of the protocol. The {\em history} $H$ of an execution $\sigma$ is the sequence of high-level invocation and response events of the emulated SWMR register in $\sigma$.
A history $H$ defines a partial order $\prec_H$ on operations. $op_1\prec_H op_2$ if the response event of $op_1$ precedes the invocation event of $op_2$ in $H$. $op_1$ is concurrent with $op_2$ if neither precedes the other.

In our algorithm, we assume that each reader process has a helper thread that takes infinitely many steps even if the reader stops reading the implemented register. These steps are outside the invocation-response intervals of the readers' own operations. Also, the linearization point of a pseudo-correct write operation may fall after the invocation-response interval. These are non-standard features of our shared memory model.

\subsection{Linearizability of a Concurrent History}
{\em Linearizability}, a popular correctness condition for concurrent objects, is defined using an object's sequential specification. 
\begin{definition}
\label{definition:linearization}
(Linearization of a concurrent history:) A {\em linearization} of a concurrent history $H$ of object $o$ is a sequential history $H'$ such that:
\begin{enumerate}
    \item After removing some pending operations from $H$ and completing others by adding matching responses, it contains the same invocations and responses as $H'$,
    \item $H'$ preserves the partial order $\prec_H$, and
    \item $H'$ satisfies $o$'s sequential specification.
\end{enumerate}
\end{definition}

A SWMR register as well as a SWSR register expose the {\em read} and {\em write} operations. The sequential specification of a SWMR and a SWSR register states that a read operation from register $Reg$ returns the value last written to $Reg$. Following Cohen and Keidar \cite{DBLP:conf/wdag/CohenK21}, we manage Byzantine behavior in a way that provides consistency to correct processes. This is achieved by linearizing correct processes' operations and offering a degree of freedom to embed additional operations by Byzantine processes.

Let $H|_{correct}$ denote the projection of history $H$ to all correct processes. History $H$ is Byzantine linearizable if $H|_{correct}$ can be augmented by (some) operations of Byzantine processes such that the completed history is linearizable. Thus, there is another history with the same operations by correct processes as in $H$, and additional operations by at most $f$ Byzantine processes. 
\begin{definition}
\label{definition:blin}
(Byzantine linearization of a concurrent history \cite{DBLP:conf/wdag/CohenK21}:) A history $H$ is Byzantine linearizable if there exists a history $H'$ such that $H'|_{correct} = H|_{correct}$ and $H'$ is linearizable.
\end{definition}
An object supports Byzantine linearizable executions if all of its executions are Byzantine linearizable. SWMR %and MWMR 
registers support Byzantine linearizable executions because before every read from such a register, invoked by a correct process, one can add a corresponding Byzantine write. %However, MWMR registers are useless in practice because an adversary can prevent any communication between correct processes by controlling the scheduler.
%An object is Byzantine linearizable if all of its executions are Byzantine linearizable. SWMR and MWMR registers are de facto Byzantine linearizable because before every read from such a register, invoked by a correct process,  one can add a corresponding Byzantine write. However, MWMR registers are useless in practice because an adversary can prevent any communication between correct processes by controlling the scheduler.

\subsection{Linearizability of Register Implementations}
\label{section:linri}
%Just because a SWMR register supports Byzantine linearizable executions does not imply that the SWMR register satisfies the properties of correctness. For example, all correct read operations could return the initial value or some arbitrary values and yet support Byzantine linearizable executions. Therefore the register itself should meet the criteria of a linearizable register, in the face of Byzantine processes.
Hu and Toueg defined register linearizability in a system with Byzantine processes as follows \cite{DBLP:conf/wdag/0009T22}. They let $v_0$ be the initial value of the implemented register and $v_k$ be the value written by the  $k$th write operation by the writer $w$ of the implemented register.
\begin{definition}
\label{definition:HuToueg}
({\em Register Linearizability \cite{DBLP:conf/wdag/0009T22}:}) In a system with Byzantine process failures, an implementation of a SWMR register is linearizable if and only if the following holds. If the writer is not malicious, then:
\begin{itemize}
\item (Reading a ``current'' value) If a read operation {\tt R} by a process that is not malicious returns the value $v$ then:
\begin{itemize}
    \item there is a write $v$ operation that immediately precedes {\tt R} or is concurrent with {\tt R}, or
    \item $v=v_0$ (the initial value) and no write operation precedes {\tt R}.
\end{itemize}
\item (No ``new-old'' inversion) If two read operations {\tt R} and {\tt R'} by processes that are not malicious return values $v_k$ and $v_{k'}$, respectively, and {\tt R} precedes {\tt R'}, then $k\leq k'$.
\end{itemize}
\end{definition}
While this definition of register linearizability is consistent with the definition of a Byzantine linearization of a concurrent history (Definition~\ref{definition:blin}), in the sense that both are concerned only with correct processes' views, it is not ideal for the reasons given in Section~\ref{section:motivation}. %these reasons. 
Therefore the register should meet stronger criteria of a linearizable register, in the face of Byzantine processes, to accommodate the behavior of the Byzantine writer when it is behaving (writing) correctly.
We term such a register as a {\em Byzantine linearizable register}. In this paper, we first define a Byzantine linearizable register, and then solve the problem of constructing a Byzantine linearizable SWMR register from SWSR registers.

%%%%%%%%%%%%%%%%%%%%%%%%%%%%%%%%%%%%%%%%%%%%%%%%%%%%%%%%%
\section{Characterization of Byzantine Register Linearizability}
\label{section:characterization}
%\subsection{Theoretical Framework}
The object SWMR register supports Byzantine linearizable executions \cite{DBLP:conf/wdag/CohenK21}. However, we need to construct a SWMR register from SWSR registers. We assume wlog that there are $n$ SWSR registers $R\_init_{wi}$ writable by the single writer $w$ and readable by reader $i \in [i,n]$. Here we characterize the requirements for such a construction, culminating in Definition~\ref{definition:lint} of Byzantine Register Linearizability. 

The writer as well as the reader processes can be Byzantine. As a Byzantine reader can return any value whatsoever, the linearizability specification is based on values that correct readers return. The Byzantine writer can behave anyhow and can write different values to the SR registers, or write different values to different subsets of SR registers while not writing to some of them at all, %or write using arbitrary counter values, 
or write multiple different values over time to the same some or all SR registers, as part of the same write operation. We assume that $t$ of the $n$ readers are Byzantine. In our characterization, we seek recourse to logical time but present the final definition of the Byzantine linearizable register using physical time.

We refer to a write operation $o(v)$ by a timestamp vector $T$  of size $n$, where $T[i]$ is the logical timestamp assigned 
%by reader $i$ 
to a value of $v$ written into $R\_init_{wi}$. 
This vector gives the {\em logical time vector} of the write operation $o(v)$. As $T[i]$ is a logical timestamp, it can equivalently be assigned by reader $i$. Thus, correct readers assign monotonically increasing timestamp values whereas a timestamp that violates this must have been assigned by a Byzantine reader and can be rejected/ignored by correct readers. We define the relations $\leq$, $<$, and $||$ (concurrent) on the set of timestamp vectors ${\cal T}$ in the standard way as follows.
\begin{itemize}
\item $T_1\leq T_2 =_{def} \forall i \in[1,n], T_1[i]\leq T_2[i]$
\item $T_1<T_2 =_{def} T_1\leq T_2 \wedge \exists i\,|\, T_1[i]<T_2[i]$
\item $T_1\,||\,T_2 =_{def} \exists i \,| T_1[i]<T_2[i] \wedge \exists j\,|\, T_1[j]>T_2[j]$
\end{itemize}

$({\cal T},<)$ forms a lattice.

In general, when an object $O_1$, denoted a {\em high-level object (HLO)} is simulated or constructed using objects of another type $O_2$, denoted a {\em low-level object (LLO)}, there are two interfaces. A process interacts with the HLO through a {\em high-level interface (HLI)} through alternating invocations and matching responses. Between such a pair of matching invocation and response, the process interacts with the LLO through a {\em low-level interface (LLI)} using alternating invocations and responses. Such interactions are in software. 

For our problem, the HLO is the Byzantine-tolerant SWMR atomic register and the HLI is the read and write operation. The LLO is the SWSR atomic register and the LLI is also the read and write operation. We term the program code executed below the HLI and above the LLI for a single invocation of a write/read at the HLI as the code or protocol for the (HLI) write operation/read operation, respectively. 

In the face of Byzantine readers as well as a Byzantine writer, we need to define a correct write operation. In the sequel, we use $u$ or $v$ to refer to the actual data value written. A $write(v)$ invocation at the HLI may be converted at a Byzantine writer into possibly multiple operation invocations for different $write(v')$ at the LLI to all or some subset of the various instances of the LLO. If a $write(v)$ invocation at the HLI is converted by a Byzantine writer into an invocation of $write(v')$ and it executes the protocol exactly for this value $v'$, it is considered as a correct write operation because that can be taken to be the value the writer writes or intended to write. Likewise if the $write(v)$ at the HLI is converted into multiple serial invocations of $write(v')$ (for different values of $v'$) and the protocol for each of these $v'$ is correctly followed, these various $write(v')$ are considered correct write operations because that sequence of write operations can be taken to be the values the writer writes or intended to write. This is because the invocation/response at the HLI is at a Byzantine process which controls the execution of code above the LLI and above the HLI. In a correct write operation, the code between the HLI and the LLI is followed correctly by the Byzantine process.

\begin{definition}
\label{definition:cwot}
A {\em correct write operation} is a write operation that follows the write protocol.
\end{definition}

So far in the literature \cite{DBLP:conf/wdag/CohenK21,DBLP:conf/wdag/0009T22}, any behavior of a Byzantine writer is allowable in the linearizability definition. We accommodate a Byzantine writer differently and introduce the concept of a {\em pseudo-correct write operation} (Definition~\ref{definition:pcwot}), which is a Byzantine write operation that has the effect of a correct write operation. This is first informally motivated as follows. A Byzantine write operation can, for example, 
\begin{enumerate}
\item 
write multiple values to the various $R\_init_{wi}$ (possibly resulting in multiple pseudo-correct write operations) or
\item together with earlier write operations write a single value (possibly resulting in a pseudo-correct write operation), or 
\item together with earlier write operations that wrote different values write those values (possibly resulting in multiple pseudo-correct write operations).
\end{enumerate}
Thus, there is no longer a one-one mapping from write operations issued to the HLI object interface to values written to the object; it is a many-many mapping. {\em If a pseudo-correct write operation is a result of values written across multiple HLO write operations, we define that pseudo-correct write operation to occur in the latest of those HLO write operations. Its invocation and response are those of that latest HLO operation.
%A pseudo-correct write does not have a specific unique invocation and response; 
Its linearization point occurs when the write takes effect for correct readers and this can happen even after the HLO write response to the HLO write invocation in which the pseudo-correct write occurred.}
Due to the many-many mapping, one HLO write invocation-response can result in different values being written by the pseudo-correct operations and read/returned by correct readers. This does not pose any ambiguity because the different values that are returned to the correct readers have different logical timestamp vectors.
%As pseudo-correct write operations are composed of Byzantine actions, the set of pseudo-correct write operations is totally ordered based on some total order (in logical time) in which the operations were performed.
%As pseudo-correct write operations are composed of Byzantine actions, the set of pseudo-correct write operations is partially ordered based on the order in which the operations were performed.
%The linearization point of a pseudo-correct write occurs before the linearization point of the next correct or pseud-correct write; and hence the correct and pseudo-correct writes are totally ordered.

\begin{definition}
\label{definition:ppcwot}
A {\em potential pseudo-correct write operation} of value $v$ is a write operation, timestamped $T(v)$, that may not follow the write protocol but 
\begin{enumerate}
\item 
$T$ is such that there does not exist any correct write operation timestamp $T'$ where $\exists i\,|\, T[i]<T'[i] \wedge \exists j\,|\, T[j]>T'[j]$, and
\item there is a quorum of size $\geq n-t$ indices $i$ such that $v$ was written to $R\_init_{wi}$ and logically timestamped $T(v)[i]$ (equivalently and in practice by reader $i$). 
\end{enumerate}
%the Byzantine (and non-Byzantine) processes behave in such a way as to transform the effect of this write operation into the effect of a correct write operation. % that takes effect before the next correct or pseudo-correct write takes effect. 
\end{definition}

\begin{definition}
\label{definition:ret}
A write operation {\em stabilizes} if its value is returnable, meaning eligible for being returned, by a correct reader.
\end{definition}
A correct write operation always stabilizes whereas a potential pseudo-correct write may stabilize, depending on run-time dynamic data races, steps of Byzantine readers, and the algorithm. Only all write operations that stabilize have a linearization point.

\begin{definition}
\label{definition:pcwot}
A {\em pseudo-correct write operation} is a potential pseudo-correct write operation that stabilizes.
\end{definition} 

\begin{definition}
\label{definition:monotonicity}
({\em Monotonicity/Total Order of stabilized write operation vector timestamps Property}:) The set of write operation timestamps that stabilize is totally ordered.%Write operation timestamps that stabilize are monotonically increasing.
\footnote{This definition is not about the monotonicity of vector timestamp values as a function of the physical time order in which their operations stabilize, but about the set of such vector timestamps which is totally ordered.}  
\end{definition}
If the Monotonicity/Total Order Property is satisfied, of any two potential pseudo-correct writes whose vector timestamps are concurrent, at most one can stabilize.

\begin{definition}
\label{definition:advancetv}
({\em Genuine Advance of Timestamp Property:})
%When $WS1\mapsto WS2$, define $WS2$ to be a {\em genuine advance} over $WS1$ if there is at least one correct reader process $i$ such that $i$'s $WS1$ timestamp $<$ $i$'s $WS2$ timestamp. 
%
For vector timestamps $T_1(v_1)$ and $T_2(v_2)$ of correct and pseudo-correct write operations 
%%%%%%%%%%%%%%%%%%%%%%%%%%%%%%%%%%%%%%%%%%%%%
%(or of two write operations that stabilize) 
%%%%%%%%%%%%%%%%%%%%%%%%%%%%%%%%%%%%%%%%%%%%%
such that $T_1<T_2$, there is index $i$ of a correct reader process such that $T_1[i]<T_2[i]$. 
\end{definition}
In conjunction with the Monotonicity/Total Order Property, the Genuine Advance of Timestamp Property guarantees that there has been progress from $T_1$ to $T_2$ and this progress includes a new write of $v_2$ to a $R\_init_{wi}$ register for a correct process $i$. Thus the progress is not a fake operation reported by a Byzantine reader. %This is in addition to the fact that 
Further, the latest values as per $T_2$ at $n-t$ $R\_init_{wj}$ are also $v_2$. %, which follows from the definition of a correct and pseudo-correct write operation.
%if the operations are correct or pseudo-correct writes.

Let $T_{corr}$ denote the timestamp of a correct or a pseudo-correct write operation, which is the timestamp of a stabilized write operation. 
\begin{definition}
\label{definition:consistentt}
A {\em consistent timestamp} of a write operation 
%the $R\_init_{wi} (\forall i\in [1,n])$ 
is a vector timestamp $T$ such that $\not\exists T_{corr}\,|\,(\exists i\,|\, T[i]<T_{corr}[i] \wedge \exists j\,|\, T[j]>T_{corr}[j])$.
\end{definition}
The set of consistent timestamps forms a sublattice $({\cal T}_c,<)$ of $({\cal T},<)$.  
%The set of all correct and pseudo-correct writes is the lattice $({\cal T_c},<)$. 
No consistent write timestamp is concurrent with %, i.e.,  $\not<,\not>$, 
a correct or pseudo-correct write timestamp. Thus at run-time, it is determined that a correct write timestamp and a pseudo-correct write timestamp is an {\em inevitable timestamp} in $({\cal T}_c,<)$. The execution path traced in the lattice $({\cal T}_c,<)$ passes through these correct and pseudo-correct write operation timestamp states.

\begin{definition}
\label{definition:view}
({\em View Consistency Property:}) If a write operation $T(v)$ is returned to a correct reader's read {\tt R}, it is returned to all correct readers' read operations that are preceded by {\tt R}, assuming there are no further writes to $R\_init_{wi}$ after $T(v)$.
\end{definition}
\begin{definition}
\label{definition:to}
({\em Total Ordering Property:}) If two write operations $T(v_1)$ and $T(v_2)$ are seen by any two correct readers, they are seen in a same common order.
\end{definition}
View Stability and Total Ordering properties together guarantee that all correct readers can 
%potentially, depending on run-time races in the algorithm execution, 
see all correct and pseudo-correct writes in the same order.
%And along with Monotonicity/Total Order of Vector Timestamps of Stabilized Writes property guarantees that this order is??

Only correct and pseudo-correct writes may be returned by correct readers. A correct reader cannot distinguish between a correct and a pseudo-correct write operation. A pseudo-correct write operation $o_1(v_1)$ timestamped $T_1(v_1)$ may lose a race due to asynchrony of process executions to a pseudo-correct or correct write operation timestamped $T_2(v_2)$ where $T_1<T_2$, in which case $v_1$ is not actually returned to any read operation and $o_1$ is deemed to have an {\em invisible} linearization point.  The correct and pseudo-correct writes are totally ordered by their linearization points, and (if the Monotonicity/Total Order on Vector Timestamps Property is satisfied) this order is 
(a) the total order on their timestamps, which is 
(b) the total order in logical time in which these writes were performed, as also 
(c) the total order in which these write operation timestamped values are encountered in a run-time traversal of the lattice $({\cal T}_c,<)$ and potentially returned by 
HLI read operations. %respects some %partial 
%total order (in logical time) in which the writes were performed.%which is the order in which the writes occurred. 
\footnote{Note this this total order on the linearization points is not uniquely defined by the natural total order in which operations were issued to the HLI object interface, because as observed earlier, there is a many-many mapping from the writes issued to the HLI object interface to values written to the object.}

In our characterization, we used logical time and LLOs but now present the final definition of the Byzantine linearizable register using physical time and HLOs.
Let $v^i$ be the value written by the $i$th correct or pseudo-correct write $W^i$, following the notation in \cite{amp}. Note that to determine $i$, $v^i$ and $W^i$ requires knowing what happened below the HLI and above the LLI because of the nature of pseudo-correct writes; but there is actually no need to determine $i$, $v^i$, and $W^i$.

\begin{definition}
\label{definition:lint}
(Byzantine Linearizable Register). In a system with Byzantine process failures, an implementation of a SWMR register is linearizable if and only if the following two properties are satisfied.
\begin{enumerate}
\item {\bf Reading a current value:} When a read operation {\tt R} by a non-Byzantine process returns the value $v$:
\begin{enumerate}
    \item if $v=v_0$ then no correct or pseudo-correct write operation precedes {\tt R}
    \item \label{toughcase} else if $v\neq v_0$ then $v$ was written by the most recent correct write operation that precedes {\tt R} or by a later pseudo-correct or correct write operation (either a pseudo-correct write operation, that precedes or overlaps with {\tt R}, or a correct write operation that overlaps {\tt R}).
    %else if $v\neq v_0$ then $v$ was written by the most recent correct write operation or by a later lower threshold write operation that becomes successful, that precedes or overlaps with {\tt R} % there is a write operation (by a possibly Byzantine process) that immediately precedes {\tt R} or overlaps with {\tt R}
\end{enumerate}
\item {\bf No ``new-old'' inversions:} If read operations {\tt R} and {\tt R'} by non-Byzantine processes return values $v^i$ and $v^j$, respectively, %$\langle k,v\rangle$ and $\langle k',v'\rangle$, respectively, 
and {\tt R} precedes {\tt R'}, then $i\leq j$. %$\langle k,v\rangle.\overline{T} < \langle k',v'\rangle.\overline{T}$.
\end{enumerate}
\end{definition}
In Definition~\ref{definition:lint}, ``precedes'', ``overlaps'', and ``later'' are with respect to physical time of HLI invocations and responses. In Case~\ref{toughcase}, note that the most recent correct write operation {\tt W} that precedes {\tt R} has its linearization point within the physical time duration of {\tt W}. Later pseudo-correct write operations that precede {\tt R} or overlap with {\tt R} may have their linearization points after their response in physical time. In particular, the most recent pseudo-correct write operation that precedes {\tt R} in physical time may have its linearization point after its physical time duration completes and hence it may be during {\tt R} or even after {\tt R} completes. 

Additionally, Monotonicity/Total Order of Vector Timestamps Property, Genuine Advance Property, View Consistency Property, and Total Ordering Property are useful properties that overcome the drawbacks of the previous definition(s) of Byzantine Register Linearizability discussed in Section~\ref{section:motivation}.

\section{The Algorithm}
\label{section:algorithm}
\subsection{Basic Idea and Operation}
Because the writer $w$ is Byzantine, a reader $i$ cannot unilaterally use the value in its register $R\_init_{wi}$ but needs to coordinate with other readers. But a reader may not invoke a read operation indefinitely. So we assume that each reader process has a reader helper thread that is always running and participates in this coordination. A correct write operation needs to wait for acknowledgements from the readers so that the write value is guaranteed to get written in the algorithm data structures and stabilize, and is not overwritten before it stabilizes. This guarantees progress by ensuring that the most recent correct write operation advances with time. The algorithm data structures are described in Algorithm~\ref{alg:lin}. The reader helper thread is given in Algorithm~\ref{alg:helperlin}. $w$ is the writer process and $P$ denotes the set of $n$ readers.

The writer writes $\langle k,u\rangle$, where $u$ is the value to be written and $k$ is a sequence number assigned by the writer, to all the reader registers $R\_init_{wi}$ and then waits for $n-t$ of the acknowledgement registers $R\_ack_{iw}$ to be written with this value.

The reader helper thread ($\forall p\in P$) loops forever. In each iteration, it reads $R\_init_{wp}$ and if the value overwrites the earlier value, it increments its local logical time $s$ (the {\em witness time}) and writes $\langle\langle k,u\rangle, s,p\rangle$ to all the registers $R\_witness_{pi}$. It then reads each $R\_witness_{ip}$ and if the logical time of an entry is larger than the previous logical time read, it stores the value in a local variable $T\_witness_i$ (if less than, or equal but the entry is different, $i$ is marked as Byzantine).
%and ignored henceforth
If there are at least $n-t$ $T\_witness_q$ entries with identical $\langle k,u\rangle$ values then  these $T\_witness_q$ entries are placed in the local set $Witness\_Set$. This $Witness\_Set$ signed by $p$ is written to all $R\_inform_{pi}$.
Each $R\_inform_{ip}$ is read to local variable $T\_inform_i$. If there are at least $n-t$ $T\_inform_j$ having identical entries $\langle\langle k,u\rangle, s_l,l\rangle$ for at least $n-t$ readers $l$, these $T\_inform_j$ are placed in local variable $Inform\_Set$, which is then written in all $R\_final_{pi}$, and $R\_ack_{pw}$ is updated with $\langle k,u\rangle$. 
All the $R\_final_{ip}$ are read and placed in set $Z$. 
All the elements in $Z$, i.e., ($\forall i$) $R\_final_{ip}$ are totally ordered by a relation $\mapsto$, (to be defined in Definition~\ref{definition:mapsto}),  as will be proved in Theorem~\ref{theorem:notconc}. The latest element in $Z$ is identified as $Y$ via a call to $Find\_Latest(Z)$ and if $Y$ is different from the local $Inform\_set$, (a) it is written to all $R\_final_{pi}$ and (b) $R\_ack_{pw}$ is updated with the common $\langle k,u\rangle$ occurring in identical $T\_witness$ entries $\langle\langle k,u\rangle, s_l,l\rangle$ for $\geq$ $n-t$ values of $l$ in all the at least $n-t$ $Witness\_Sets$ in $Inform\_Set$ that is $Y$.

Observe that $R\_inform_{**}$ registers are needed to ensure that the $R\_final_{**}$ entries, i.e., their vector timestamps, form a total order; otherwise if two correct processes were to concurrently write their respective $Witness\_Set$s into their $R\_final$ rows, the two entries may be partially ordered. This introduces an additional level of indirection in the algorithm.

\begin{algorithm}[th!]
%\SetAlgoLined
{\small %\small
\SetKwComment{Comment}{$\triangleright$ }{}
%\KwData{
{\bf Shared variables:\\}
For all processes $i\in P$: $R\_init_{wi}$: atomic SWSR register  $\leftarrow \langle 0,u_0\rangle$\\
For all processes $i\in P$: $R\_ack_{iw}$: atomic SWSR register  $\leftarrow \langle 0,u_0\rangle$\\
For all processes $i$ and $j$ in $P$:
$R\_witness_{ij}$: atomic SWSR register  $\leftarrow \langle\langle 0,u_0\rangle, 0,i\rangle$\\
For all processes $i$ and $j$ in $P$: $R\_inform_{ij}$: atomic SWSR register  $\leftarrow \langle\bigcup_{k\in P}\{ \langle\langle 0,u_0\rangle, 0,k \rangle\}\rangle_i$\\
For all processes $i$ and $j$ in $P$: $R\_final_{ij}$: atomic SWSR register  $\leftarrow \bigcup_{l\in L}\{\langle\bigcup_{k\in P}\{ \langle\langle 0,u_0\rangle, 0,k \rangle\}\rangle_l\}$, where $|L|\geq n-t \wedge L\subseteq P$\\
\medskip
{\bf Local variables:}\\
variable of $w$: $c$ $\leftarrow$ 0\\
variable of $i \in P$: $s$ $\leftarrow$ 0\\
variables of $j\in P$: For all processes $i\in P$: $T\_witness_i$ $\leftarrow$ $\langle\langle 0,u_0\rangle, 0,i\rangle$\\
variables of $j\in P$: For all processes $i\in P$: $T\_inform_i$ $\leftarrow$ $\langle\bigcup_{k\in P}\{ \langle\langle 0,u_0\rangle, 0,k\rangle\}\rangle_i$\\
variable of $j\in P$: $Witness\_Set$ $\leftarrow$ $\langle\bigcup_{k\in P}\{ \langle\langle 0,u_0\rangle, 0,k\rangle\}\rangle_p$\\ 
%, where $K\subseteq P, |K|\geq n-t$\\
variable of $j\in P$: $Inform\_Set$ $\leftarrow$ $\bigcup_{l\in L}\{\langle\bigcup_{k\in P}\{ \langle\langle 0,u_0\rangle, 0,k\rangle\}\rangle_l\}$, where $|L|\geq n-t \wedge L\subseteq P$\\
%$\bigcup_i\{ \langle\langle 0,u_0\rangle_w\rangle_i\}$\\
%$Q_j, j\in P$: queues at processes $i\in P$: initialized to $\langle 0,0\rangle_w$
%$tuples, max\_tuple$: variables of type set of each $p\in P$; initialized to $\emptyset$
%}
\medskip

\underline{WRITE($u$):}\\
$c=c+1$\\
$W(\langle c,u\rangle)$\\
{\bf return}\\

\medskip
\underline{READ():}\\
$R()$\\
{\bf return} the value returned by the $R$ call\\

\medskip
\underline{$W(\langle k,u\rangle)$:} \\
\For{{\bf every} process $i\in P$}{
 $R\_init_{wi}=\langle k,u\rangle$
}
$d=0$\\
\While{$d< n-t$}{
 \For{each new $\langle k,u\rangle$ in $R\_ack_{iw}$ among $R\_ack$ registers}{
  $d=d+1$
 }
}
{\bf return done}

\medskip
\underline{$R()$:}\\
execute one iteration of the reader helper thread, Algorithm~\ref{alg:helperlin}\\
{\bf return} the value last written in $R\_ack_{pw}$ in that execution

}%small
\caption{Constructing a linearizable SWMR atomic register. Code at process $p$.}
\label{alg:lin}
\end{algorithm}

\begin{algorithm}[th!]
%\SetAlgoLined
%\begin{multicols}{2}
%\begin{algorithmic}[1]
{\small %\small
\SetKwComment{Comment}{$\triangleright$ }{}
\medskip
\underline{reader helper thread:} \\
\While{$true$}{
 \If{$R\_init_{wp} (=\langle k,u\rangle)$ is newly written (overwrites a different value) %and does not exist as $tail(Q_p)$
 }{
   $s=s+1$\\
   \For{each $i\in P$}{
    $R\_witness_{pi}\leftarrow \langle\langle k,u\rangle, s,p\rangle$
   }
   %append value $\langle\langle k,u\rangle , s\rangle_p$ to $Q_i$
  }
  \For{each $i\in P$}{
  \If{$R\_witness_{ip} (=\langle\langle k,u\rangle, s',i\rangle)$ is newly written, i.e., $s' > T\_witness_i.s$ %tail(Q_i).s$ %and does not exist in $Q_i$,
  \label{line:ku}
  }{
   $T\_witness_i\leftarrow \langle\langle k,u\rangle, s',i\rangle$
   %append value $\langle\langle k,u\rangle , s\rangle_p$ to $Q_i$
  }
  \ElseIf{$s'<T\_witness_i.s \vee (s'=T\_witness_i.s \wedge \langle k,u\rangle (Line~\ref{line:ku}) \neq T\_witness_i.\langle k,u\rangle)$}{
   (optionally) mark process $i$ as Byzantine% and ignore it henceforth
  }
 }

%  $Z\leftarrow\emptyset$\\
 \If{$\exists$ $\geq$ $n-t$ latest elements $\langle\langle k,u\rangle,*,q\rangle$ $T\_witness_q$ %, each in separate queues, at the tail of some $Q_j$
 }{
  add all these $\geq n-t$ elements $T\_witness_q$ to the emptyset $Witness\_Set$\\ % and sign $V$ to create $\langle V\rangle_p$\\
  \For{each $i\in P$}{
   $R\_inform_{pi}\leftarrow \langle Witness\_Set\rangle_p$
  }
  \For{each $i\in P$}{
   $T\_inform_{i}\leftarrow R\_inform_{ip}$
  }
  \If{$\exists \geq n-t$ processes $j$ (i.e., $T\_inform_j$) having identical entries $\langle\langle k,u\rangle,s_l,l\rangle$ for $\geq n-t$ processes $l$}{
   Place the $\geq n-t$ $T\_inform_j$ in emptyset $Inform\_Set$\\
   \For{each $i\in P$}{
    $R\_final_{pi}\leftarrow Inform\_Set$ 
   }
   $R\_ack_{pw}\leftarrow \langle k,u\rangle$
  }
 }
 $Z\leftarrow\emptyset$\\
 \For{each $i\in P$}{
  $Z\leftarrow Z \cup \{R\_final_{ip}\}$\\
 }
 $Y\leftarrow Find\_Latest(Z)$\\
 \If{$Y\neq Inform\_Set$}{
  \For{each $i\in P$}{
   $R\_final_{pi}\leftarrow Y$
  }
  $R\_ack_{pw}\leftarrow$ the common $\langle k,u\rangle$ value occurring in identical $T\_witness$ entries $\langle\langle k,u\rangle,s_l,l\rangle$ for $\geq n-t$ values of $l$ in
  all the $\geq n-t$ $Witness\_Set$s in $Inform\_Set$ that is $Y$ 
 }
% }
}%while-forever
\medskip
\underline{$Find\_Latest(Z)$:}\\
\While{$|Z|>1$}{
 let $Z_i, Z_j$ be any two elements of $Z$; $Z_i$ ($Z_j$) has $\geq n-t$ entries and is the $Inform\_Set$ of some reader\\
 Let $Q_i$ ($Q_j$) be the set of $\geq n-t$ processes that have identical $T\_witness$ entries in the $\geq n-t$ $Witness\_Set$s in $Inform\_Set$ that is $Z_i$ ($Z_j$). For the $\geq n-2t$ readers $q\in Q_i\cap Q_j$, let $e^i_q = \langle\langle k,u\rangle, s^i_q,q\rangle$ and $e^j_q = \langle\langle k',u'\rangle, s^j_q,q\rangle$ be these $T\_witness$ entries in the $\geq n-t$ $Witness\_Set$s in the $Inform\_Set$s that are $Z_i$ and $Z_j$, resp.\\
 % For the common processes $q$ whose $Inform\_Set$s are elements in $Z_i$ and $Z_j$, let $e_{qi} = \langle\langle k,u\rangle_w, s_{qi}\rangle_q$ and $e_{qj} = \langle\langle k',u'\rangle_w, s_{qj}\rangle_q$ be the elements of $Z_i$ and $Z_j$ in queue $Q_q$\\

% let $Z_i, Z_j$ be any two elements of $Z$; In $Z_i$ ($Z_j$), there are $\geq n-t$ entries each\\
% For the common processes $q$ whose $Inform\_Set$s are elements in $Z_i$ and $Z_j$, let $e_{qi} = \langle\langle k,u\rangle_w, s_{qi}\rangle_q$ and $e_{qj} = \langle\langle k',u'\rangle_w, s_{qj}\rangle_q$ be the elements of $Z_i$ and $Z_j$ in queue $Q_q$\\
 \If{$s^i_{q} \geq s^j_{q}$ for any/all 
 %at least $n-3t = (3t+1)$ 
 processes $q$}{
  $Z\leftarrow Z\setminus Z_j$
 }
 \Else{
  $Z\leftarrow Z\setminus Z_i$
 }
}
return(element in $Z$)

} %small
%\end{algorithmic}
%\end{multicols}
\caption{Reader helper thread for constructing a linearizable SWMR atomic register. Code at process $p$.}
\label{alg:helperlin}
\end{algorithm}
%\lipsum[2]

\subsection{Instantiating Framework Definitions in the Algorithm}
\begin{comment}
\begin{definition}{(Correct write operation in algorithm:)}
\label{definition:cwo}
A correct write operation is one that writes the same $\langle k,u\rangle$ tuple to all the $n-t$ correct readers' $R\_init_{w*}$ SWSR registers and waits for an acknowledgement from at least $n-t$ reader processes in $R\_ack_{*w}$ for this $\langle k,u\rangle$ value for the completion of that write operation.  
\end{definition}
\end{comment}
A reader sees the value written by the most recent correct write operation that precedes the read operation, or it may return a later written value that is written by a correct or pseudo-correct write operation. In the context of our algorithm, the definitions of a correct write operation, a potential pseudo-correct write operation, and a pseudo-correct write operation apply directly. 
A write operation {\em stabilizes} (Definition~\ref{definition:stabilization}) if the value is potentially returnable by a correct read operation, i.e., when it is written to $R\_final_{p*}$ for all values of $*$ (in the form of an $Inform\_Set$ containing the required number of correctly signed $Witness\_Set$s).
A correct write operation always stabilizes whereas a potential pseudo-correct write operation may stabilize depending on the outcome of concurrency data races and behavior of the writer and Byzantine readers. %to become successful, i.e., to become visible to the correct readers. The value written by a correct or pseudo-correct write operation is defined to {\em stabilize} in the algorithm, i.e., that value can potentially be returned by a (correct) read operation, when it is written to $R\_final_{p*}$ (in the form of an $Inform\_Set$ containing the required number of signed $Witness\_Set$s). %Such an operation is a pseudo-correct write operation in the algorithm.

\begin{comment}
\begin{definition}{(Pseudo-correct write operation in algorithm:)}
\label{definition:ltc}
A lower threshold of correctness write operation becomes {\em successful} if the same value is written to at least $n-2t$ correct processes' $R\_init_{w*}$ as the value written across possibly multiple earlier operations including at least to one reader in this operation, and
\begin{enumerate}
    \item the value percolates and stabilizes in the algorithm gadget before being overwritten
    \item the Byzantine readers collaborate to allow the value being written to the other correct readers to percolate and stabilize in the algorithm gadget before being overwritten.
\end{enumerate}
\end{definition}
\end{comment}

A correct write operation always stabilizes because it waits for $n-t$ acknowledgements in $R\_ack_{*w}$ before completion, thereby allowing the value it has written to  $R\_final_{p*}$ to be eligible for being returned by a read operation. %, and thus meeting the first condition for success, and (2) by writing the same value to all $n-t$ correct readers in $R\_init_{w*}$, its non-dependence on the Byzantine readers trivially satisfies the second condition of success.
A correct write operation corresponds to a sufficient condition for stabilization. Writing the same value to $n-t$ readers' $R\_init_{wi}$ in a potential pseudo-correct operation is a necessary condition for stabilization. For the potential pseudo-correct write operation to become a pseudo-correct write operation, the Byzantine readers need to collaborate to allow the value being written to the other correct readers' $R\_init_{wi}$ to stabilize. %, i.e., be written to $R\_final_{p*}$ for some $p$ before being overwritten.
Recall that a pseudo-correct write operation may have the value written to the readers' $R\_init_{wi}$ registers across multiple prior write operations.
The set of correct and pseudo-correct writes is exactly the set of writes whose values stabilize (follows from Theorem~\ref{theorem:onlystabilize}).

We show (Corollary~\ref{corollary:totalorder}) that the set of all values that stabilize are totally ordered by the logical times of their ``reading'' from the readers' registers $R\_init_{w*}$. %(which we term as {\em components}), possibly across multiple operations. 
Specifically, we use $\overline{T}$ to denote the vector of
logical times of reading a value $\langle k,u\rangle$ from $R\_init_{wi}$ by the various readers $i$, and we denote this total order relation $\mapsto$. The notation $\overline{T}$ as opposed to the timestamp vector $T$ we introduced in Section~\ref{section:characterization} is useful because not all readers may report the logical times of their reading $u$ from $R\_init_{w*}$. Thus $\overline{T}(u)$ may not have all $n$ components whereas $T(u)$ has all $n$ components. Roughly speaking, the $\overline{T_1}\mapsto\overline{T_2}$ relation is equivalent to $T_1<T_2$; the formal definition of $\mapsto$ is given later in Definition~\ref{definition:mapsto}.
We will also abbreviate $\langle k,v\rangle.\overline{T} \mapsto \langle k',v'\rangle.\overline{T}$ as simply $\langle k,v\rangle\mapsto \langle k',v'\rangle$.
The total order $\mapsto$ on timestamp vectors of values that stabilize is the total order on the linearization points of correct and pseudo-correct write operations.\footnote{As noted in Section~\ref{section:characterization}, the total order {\em in which} the values stabilize may be different from this total order because of concurrency races for stabilization between a pseudo-correct write $T_1(u_1)$ and a pseudo-correct or correct write $T_2(u_2)$, where $T_1<T_2$. If $T_2(u_2)$ stabilizes before $T_1(u_1)$, the write of $u_1$ will have a invisible linearization point as it will not be returned to any correct read operation.} This follows from the {\em no ``new-old'' inversions} clause in Theorem~\ref{theorem:lin}.

%%%%%%%%%%%%%%%%%%%%%-->

More than one value can stabilize %(successively) according to this $\mapsto$ total order relation 
as part of the same HLO write operation. This can happen when, for example, some of the readers' registers could have been written to in earlier Write operations or a HLO invocation-response of a Byzantine write contains multiple pseudo-correct write operations. %However, the vector of logical times of the reading from the readers' registers $R\_init_{w*}$ for values that stabilize is strictly increasing. 
For any pair of values that stabilize, this total order $\mapsto$ between them satisfies the Genuine Advance Property if $n>3t$, as we will prove in Theorem~\ref{theorem:advance}. %For any pair of values that stabilize, this total order $\mapsto$ preserves the common order (if it exists) of values written to correct processes' $R\_init_{w*}$, as we will prove in Theorem~\ref{theorem:3t+1}.

With the introduction of the $\mapsto$ relation, the definition of register linearizability (Definition~\ref{definition:blin}) is adapted to the algorithm by rephrasing the {\tt No ``new-old'' inversions} property as follows.
\begin{definition}
\label{definition:lin}
(Byzantine Linearizabile Register for the algorithm). In a system with Byzantine process failures, an implementation of a SWMR register is linearizable if and only if the following two properties are satisfied.
\begin{itemize}
\item {\bf Reading a current value:} When a read operation {\tt R} by a non-Byzantine process returns the value $v$:
\begin{itemize}
    \item if $v=v_0$ then no correct or pseudo-correct write operation precedes {\tt R}
    \item else if $v\neq v_0$ then $v$ was written by the most recent correct write operation that precedes {\tt R} or by a later pseudo-correct or correct write operation (either a pseudo-correct write operation, that precedes or overlaps with {\tt R}, or a correct write operation that overlaps {\tt R}).
    %else if $v\neq v_0$ then $v$ was written by the most recent correct write operation or by a later lower threshold write operation that becomes successful, that precedes or overlaps with {\tt R} % there is a write operation (by a possibly Byzantine process) that immediately precedes {\tt R} or overlaps with {\tt R}
\end{itemize}
\item {\bf No ``new-old'' inversions:} If read operations {\tt R} and {\tt R'} by non-Byzantine processes return values $\langle k,v\rangle$ and $\langle k',v'\rangle$, respectively, and {\tt R} precedes {\tt R'}, then $\langle k,v\rangle.\overline{T} \stackrel{=}{\mapsto} \langle k',v'\rangle.\overline{T}$.
\end{itemize}
\end{definition}

%A Byzantine writer may write different values in the register as part of the same atomic write operation, where each value is intended for one of the readers and it may write no value for some of the readers.

%%%%%%%%%%%%%%%%%%%%%%%%%%%%%%%%%%%%%%%%%%%%%%%%%%%%%%%%%%%
\section{Correctness Proof}
\label{section:proof}
\begin{definition}
\label{definition:wsis0}
The witness set of an $Inform\_Set$ $IS$ that is formed, $WS(IS)$, is the maximal set of at least $n-t$ $T\_witness$ entries $\langle\langle k,u\rangle, s_l,l\rangle$ common to the at least $n-t$ elements ($Witness\_Set$s) of the $Inform\_Set$ $IS$.
\end{definition}
Those at least $n-t$ identical entries in the intersection of the at least $n-t$ $Witness\_Set$s in the $Inform\_Set$ $IS$ form $WS(IS)$.

\begin{definition}
\label{definition:stabilization}
The field/value $\langle k,u\rangle$ common to all the entries in the $WS(IS)$ witness set of an inform set $IS$ is defined to stabilize when the $Inform\_Set$ $IS$ containing correctly signed $Witness\_Set$s is written to all the $R\_final_{pi}$ for some 
%%%%%%%%%%%%%%%%%%%%%
%correct 
%%%%%%%%%%%%%%%%%%%%%%
process $p$.
\end{definition}

\begin{comment}
\begin{definition}
\label{definition:wsis}
For a value $\langle k,u\rangle$ that stabilizes with $Inform\_Set$ $IS$, the witness set of its inform set $WS(IS)$ is the maximal set of the at least $n-t$ $T\_witness$ entries $\langle\langle k,u\rangle, s_l\rangle$ in the intersection of (the?) at least $n-t$ $Witness\_Set$s in $Inform\_Set$ $IS$.
\end{definition}
\end{comment}

\begin{definition}
\label{definition:stabilizetime}
For a value $\langle k,u\rangle$ that stabilizes with $Inform\_Set$ $IS$, $\langle k,u\rangle.\overline{T}$ is the set of tuples $(l, s_l)$ for all the at least $n-t$ reader processes $l$ for all the at least $n-t$ $T\_witness_l$ %(identical?) 
entries $\langle\langle k,u\rangle, s_l, l\rangle$ in $WS(IS)$.
\end{definition}

Only a value that stabilizes may be returned by a reader.

\begin{theorem}
\label{theorem:correctstabilize}
A correct write operation is guaranteed to stabilize %in the algorithm gadget and become successful 
provided $n>2t$.
\end{theorem}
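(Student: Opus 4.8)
The plan is to follow the correct write's value $\langle k,u\rangle$ through the four levels of the gadget (init $\to$ witness $\to$ inform $\to$ final) and show that every correct reader drives it all the way into its $R\_final_{p*}$ registers, which is exactly what Definition~\ref{definition:stabilization} requires. I would rely on two structural facts. By Definition~\ref{definition:cwot} a correct write operation follows the protocol, so it writes the \emph{same} pair $\langle k,u\rangle$ (with a fresh sequence number $k$, hence ``newly written'') to every $R\_init_{wi}$ and then blocks in its $W$ wait-loop until $n-t$ distinct $R\_ack_{*w}$ carry $\langle k,u\rangle$; and every correct reader's helper thread takes infinitely many steps. Since the writer is blocked for the entire operation, $\langle k,u\rangle$ is not overwritten in any $R\_init_{wi}$ before it propagates, so I may reason about what each correct reader \emph{eventually} observes.

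First I would establish the witness level. There are $n-t$ correct readers; call this set $C$. Each $p\in C$ eventually reads $\langle k,u\rangle$ as newly written in $R\_init_{wp}$, increments its witness time to some $s_p$, and writes the \emph{same} tuple $\langle\langle k,u\rangle, s_p, p\rangle$ to every $R\_witness_{pi}$, so all correct readers that later read a column of $p$ see identical data. Next comes the key step, where $n>2t$ enters. Each correct reader $q$ checks for $\geq n-t$ latest $T\_witness$ entries sharing a common $\langle k,u\rangle$: the $n-t$ correct witnesses all carry $\langle k,u\rangle$, so the guard fires with common value $\langle k,u\rangle$. Because $n>2t$ gives $n-t>t$, no competing value can be witnessed by $n-t$ readers (at most $t$ Byzantine readers could back it), so $\langle k,u\rangle$ is the \emph{only} value able to fill an $n-t$ witness quorum during this operation. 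Thus every correct reader's $Witness\_Set$ has common value exactly $\langle k,u\rangle$ and contains the $n-t$ agreeing entries $\langle\langle k,u\rangle, s_l, l\rangle$, $l\in C$; each correct $q$ signs it and writes $\langle Witness\_Set\rangle_q$ to all $R\_inform_{qi}$.

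To close, I would handle the inform and final levels. Each correct reader $p$ reads $R\_inform_{qp}$ into $T\_inform_q$; the $n-t$ correct readers have deposited signed witness sets all containing the identical $n-t$ entries $\langle\langle k,u\rangle, s_l, l\rangle$, $l\in C$. Hence $p$ finds $\geq n-t$ processes $j$ (namely $C$) whose $T\_inform_j$ share identical entries for $\geq n-t$ processes $l$ (again $C$), so $p$ forms an $Inform\_Set$ $IS$, writes it to every $R\_final_{pi}$, and sets $R\_ack_{pw}=\langle k,u\rangle$. By Definition~\ref{definition:wsis0}, $WS(IS)$ is the maximal set of $\geq n-t$ common witness entries, and by the same $n>2t$ counting argument the only value that can occupy it is $\langle k,u\rangle$; its common field is $\langle k,u\rangle$. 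Therefore a correct process $p$ has written a correctly signed $Inform\_Set$ with $WS(IS)$-value $\langle k,u\rangle$ to all $R\_final_{p*}$, so $\langle k,u\rangle$ stabilizes by Definition~\ref{definition:stabilization}; moreover the $n-t$ correct readers supply the $n-t$ acknowledgements that let the write return.

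The main obstacle is not any single calculation but the asynchronous bookkeeping: I must argue that the quorum-formation guard at each level eventually becomes true and, crucially, that it becomes true \emph{with common value $\langle k,u\rangle$ rather than a Byzantine-injected value}. This is precisely where $n>2t$ is indispensable --- it guarantees that the $n-t$ correct readers outnumber the $\leq t$ Byzantine readers, making $\langle k,u\rangle$ the unique value able to fill any $n-t$ witness/inform quorum and to occupy $WS(IS)$. A secondary point I would make explicit is the non-overwriting claim: it hinges on the writer remaining blocked in its acknowledgement wait-loop for the whole operation, so that $\langle k,u\rangle$ survives in each $R\_init_{wi}$ long enough to traverse all four levels.
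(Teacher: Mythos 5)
Your propagation skeleton (init $\to$ witness $\to$ inform $\to$ final, driven by the always-running helper threads) matches the paper's proof, but there is a genuine gap at the step the whole argument leans on: why $\langle k,u\rangle$ survives in the $R\_init_{wi}$ registers until it reaches the $R\_final$ level. You assert this from ``the writer is blocked for the entire operation,'' and in your closing paragraph you call it a secondary point that hinges on the writer remaining blocked in its wait-loop. That is circular: the writer is blocked exactly until $n-t$ of the $R\_ack_{*w}$ registers carry $\langle k,u\rangle$, and nothing in your proof rules out that this threshold is reached \emph{before} stabilization, after which the writer's next operation may overwrite the $R\_init_{wi}$ and cut off the slow correct readers, killing your eventuality argument. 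The missing step --- and the place where the paper actually spends the hypothesis $n>2t$ --- is a count on the \emph{acknowledgements}: at most $t$ of the $n-t$ required acks can come from Byzantine readers, so since $n-t>t$ the write cannot return before at least one \emph{correct} reader writes $\langle k,u\rangle$ to its $R\_ack_{pw}$; and the code has a correct reader do that only after it has written its $Inform\_Set$ to all $R\_final_{p*}$, i.e., only after the value has already stabilized in the sense of Definition~\ref{definition:stabilization}. With that, non-overwriting-until-stabilization follows, and the rest of your argument goes through.

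By contrast, the use you make of $n>2t$ --- that no competing value can fill an $n-t$ witness/inform quorum --- is not what the theorem needs, and as stated it is not quite right either: early in the operation the \emph{previous} stabilized value can still occupy the latest $T\_witness$ entries of $n-t$ readers and fill a quorum; moreover your uniqueness claim itself presupposes the non-overwriting fact you left unproved (it needs the correct readers' latest entries to already be $\langle k,u\rangle$). Stabilization of $\langle k,u\rangle$ only requires that \emph{some} correct reader eventually forms an $Inform\_Set$ for it and writes it to its entire $R\_final$ row; it does not require $\langle k,u\rangle$ to be the unique quorum-capable value. So redirect your $n-t>t$ counting from the witness level to the acknowledgement registers, and the proof closes along the paper's lines.
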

\begin{proof}
A correct write operation writes the same $\langle k,u\rangle$ to $R\_init_{wp}$ for all correct $p$ and will not complete unless it gets $n-t$ acks in $R\_ack_{pw}$. It may get $t$ acks from Byzantine processes but as $n>2t$, it will need an ack from at least one correct process. A correct process $p$ gives the ack only after it has written the value to $R\_final_{p*}$, i.e., when the value has stabilized. We now show that at least one correct process will have the value stabilize, before which the value written to the $R\_init_{w*}$ will not be overwritten.

For all correct $p$, the witness timestamps are correctly written to $R\_witness_{p*}$. At least $n-t$ correct reader helper threads of correct processes $p$ will eventually read from $R\_witness_{*p}$, form their $Witness\_Set$s, sign those sets and write to $R\_inform_{p*}$. Some first correct reader thread $p$ will eventually read from $R\_inform_{*p}$ and have at least $n-t$ $T\_inform_j$ having identical entries $\langle\langle k,u\rangle, s_l,l\rangle$ for at least $n-t$ processes $l$. It will thus be able to form its $Inform\_Set$, then write it to all $R\_final_{p*}$, and will then write $\langle k,u\rangle$ to $R\_ack_{pw}$ after which the correct write operation will complete. Thus, $\langle k,u\rangle$ is guaranteed to have stabilized as the $R\_init_{wi} (\forall i)$ will not be overwritten until then.
\end{proof}

\begin{theorem}
\label{theorem:lowerthreshold}
A potential pseudo-correct write operation may stabilize provided $n>2t$.
%A write operation that meets the lower threshold of correctness may stabilize in the algorithm gadget and become successful provided $n>2t$.
\end{theorem}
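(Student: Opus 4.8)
The plan is to prove an existence statement: since a correct write is \emph{guaranteed} to stabilize (Theorem~\ref{theorem:correctstabilize}) whereas a potential pseudo-correct write only \emph{may} stabilize, it suffices to exhibit one favorable execution in which the value of a given potential pseudo-correct write operation reaches some correct reader's $R\_final_{p*}$ row as a properly formed, correctly signed $Inform\_Set$ (Definition~\ref{definition:stabilization}). I would build this execution by mirroring the propagation argument of Theorem~\ref{theorem:correctstabilize}, and then isolate the two places where the pseudo-correct case is genuinely weaker than the correct case, showing that each gap can be bridged under favorable conditions.

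First I would fix the value $\langle k,u\rangle$ of the potential pseudo-correct write. By Definition~\ref{definition:ppcwot} there is a quorum of $\geq n-t$ readers $i$ whose $R\_init_{wi}$ holds $\langle k,u\rangle$ with the appropriate logical timestamp (possibly established across several earlier write operations). At most $t$ of these quorum members are Byzantine, so at least $n-2t$ of them are correct; since $n>2t$ this count is $\geq 1$, i.e.\ at least one correct reader genuinely holds $\langle k,u\rangle$. In the favorable execution I schedule these correct quorum readers to read $\langle k,u\rangle$ from $R\_init_{wi}$, increment their witness times, and write $\langle\langle k,u\rangle, s_q, q\rangle$ to all of $R\_witness_{q*}$. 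Because each such entry is broadcast identically to every reader, any correct reader $p$ that subsequently reads $R\_witness_{qp}$ obtains the same entry $\langle\langle k,u\rangle, s_q, q\rangle$, so the resulting $Witness\_Set$s agree entrywise.

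Next I would bridge the witness-count gap, which is the only place where the weaker quorum matters. A correct write reaches all $n-t$ correct readers, whereas here only $\geq n-2t$ correct readers are guaranteed to witness $\langle k,u\rangle$. I let the $t$ Byzantine readers collaborate, signing and broadcasting matching witness entries $\langle\langle k,u\rangle, s, \cdot\rangle$, so that every correct reader sees $\geq n-t$ identical $T\_witness$ entries, forms a $Witness\_Set$ for $\langle k,u\rangle$, signs it, and writes it to $R\_inform_{p*}$. Since all $n-t$ correct readers can now do this, some correct reader $p$ reads $\geq n-t$ matching $T\_inform$ entries, assembles an $Inform\_Set$, and writes it to all $R\_final_{p*}$ -- which is precisely stabilization. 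This is the same witness $\to$ inform $\to$ final pipeline as in Theorem~\ref{theorem:correctstabilize}, now seeded by $\geq n-2t$ correct witnesses supplemented by Byzantine collaborators rather than by $n-t$ correct readers alone.

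The main obstacle -- and the reason the statement reads ``may'' rather than ``is guaranteed to'' -- is the absence of the acknowledgement-wait that protects a correct write. A correct write does not complete until $n-t$ entries appear in $R\_ack_{*w}$, which guarantees that the value in $R\_init_{w*}$ is not overwritten before it stabilizes; a potential pseudo-correct write carries no such guarantee, since the Byzantine writer need not wait and may overwrite any $R\_init_{wi}$ at any moment. I would therefore finish by stipulating a scheduling in which no $R\_init_{wi}$ holding $\langle k,u\rangle$ is overwritten until the value has propagated through the witness, inform, and final rows -- a scheduling that is possible but not forced. Thus, contingent on favorable concurrency (no premature overwrite) and on Byzantine cooperation at the witness threshold, the value stabilizes, which establishes that a potential pseudo-correct write \emph{may} stabilize when $n>2t$.
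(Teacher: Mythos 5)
Your proposal is correct and follows essentially the same route as the paper's proof: both exhibit a favorable execution in which the $\geq n-2t$ correct readers holding $\langle k,u\rangle$ feed the witness pipeline, the $t$ Byzantine readers collaborate by supplying matching witness entries (the paper has them learn the value from $R\_witness_{*b}$ and thenceforth behave correctly), the Byzantine writer refrains from overwriting $R\_init_{w*}$ until some process writes an $Inform\_Set$ to its $R\_final$ row, and the rest is delegated to the propagation logic of Theorem~\ref{theorem:correctstabilize}. Your explicit separation of the two gaps (the weaker witness quorum and the missing acknowledgement wait) is a slightly cleaner presentation of the same argument.
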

\begin{proof}
A value written by a %write operation that meets the lower threshold of correctness 
potential pseudo-correct operation
writes the same value to $n-t$ readers' $R\_init_{wp}$, i.e., to at least $n-2t$ correct readers' $R\_init_{wp}$, across possibly multiple prior write operations and the current write operation. These reader processes write that value, if not overwritten, to $R\_witness_{*p}$. Let the $t$ Byzantine processes $b$ read the value from their $R\_witness_{*b}$ and thereafter behave as though the value had been written to their $R\_init_{wp}$ and thenceforth behave correctly. There is now a way that the value may stabilize if the Byzantine writer does not overwrite the values in $R\_init_{wp}$ ($\forall p$) until at least one 
%%%%%%%%%%%%%%%%%%%
%correct
%%%%%%%%%%%%%%%%%%%%
process $q$ forms its $Inform\_Set$ of correctly signed $Witness\_Set$s for that value and writes the $Inform\_Set$ to $R\_final_{q*}$. This condition will be satisfied as per the logic in the proof of Theorem~\ref{theorem:correctstabilize} (although the writer need not wait for $n-t$ acks) as now the Byzantine reader processes $b$ are collaborating and behaving like correct reader processes after having read the value set aside for them in $R\_witness_{*b}$.
\end{proof}

\begin{theorem}
\label{theorem:onlystabilize}
If a value stabilizes, it must have been written by a correct write operation or by a potential pseudo-correct write operation. % that meets the lower threshold of correctness.
\end{theorem}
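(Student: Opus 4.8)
The plan is to unfold the stabilization condition (Definition~\ref{definition:stabilization}) and trace each witness entry of the stabilized value back to a genuine action of the writer, then check the result against the two clauses of Definition~\ref{definition:ppcwot}. Suppose a value $\langle k,u\rangle$ stabilizes. Then some process $p$ wrote an $Inform\_Set$ $IS$ of $\geq n-t$ correctly signed $Witness\_Set$s to all $R\_final_{p*}$, and by Definition~\ref{definition:wsis0} its witness set $WS(IS)$ consists of $\geq n-t$ entries $\langle\langle k,u\rangle, s_l, l\rangle$ contributed by $\geq n-t$ distinct readers $l$.

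First I would establish clause~2 of Definition~\ref{definition:ppcwot}. Since at most $t$ readers are Byzantine, at least $n-2t$ of these $\geq n-t$ contributing readers are correct. A correct reader $l$ writes $\langle\langle k,u\rangle, s_l, l\rangle$ into $R\_witness_{l*}$ only after $R\_init_{wl}$ is newly overwritten with $\langle k,u\rangle$ (reader helper thread); moreover only $l$ may write $R\_witness_{l*}$ and the witness/inform sets are signed, so a Byzantine reader cannot fabricate such an entry on behalf of a correct $l$. Hence the writer genuinely wrote $\langle k,u\rangle$ into $R\_init_{wl}$ for at least $n-2t$ correct readers, and together with the up to $t$ Byzantine readers that report it, the value is associated with a quorum of $\geq n-t$ indices $i$, logically timestamped by reader $i$ --- which is exactly clause~2. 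This simultaneously rules out any value backed only by Byzantine readers: such a value attains $<n-t$ reports (as $n>2t$), can never form a $WS(IS)$ of size $\geq n-t$, and so cannot stabilize.

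Next I would split on whether the write followed the protocol. If the writer executed the write protocol for $\langle k,u\rangle$ --- wrote it to every reader's $R\_init$ and collected $n-t$ acknowledgements --- then by Definition~\ref{definition:cwot} it is a correct write operation and this case is settled. Otherwise the write deviated from the protocol, so the only remaining possibility is a potential pseudo-correct write; having already secured clause~2, it remains to verify clause~1, namely that the value's timestamp $T=\langle k,u\rangle.\overline{T}$ is not concurrent with any correct write timestamp (equivalently, that $T$ is consistent in the sense of Definition~\ref{definition:consistentt}).

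I expect clause~1 to be the main obstacle. The difficulty is that a Byzantine writer may overwrite $R\_init_{wi}$ and $R\_init_{wj}$ with $\langle k,u\rangle$ and with the value $v'$ of a correct write in opposite relative orders, which at the level of raw reads would make $T$ and the correct write timestamp $T'$ concurrent; so non-concurrency does not follow from the monotonicity of a single correct reader's clock. The plan is to show that the signed $R\_witness \to R\_inform \to R\_final$ pipeline, which admits a value only after $\geq n-t$ inform sets agree on the same $\geq n-t$ witness entries, forces any value reaching $R\_final$ to carry a consistent timestamp: two values with concurrent timestamps cannot both clear this agreement test because their two $\geq n-t$ witness quorums share $\geq n-3t$ correct readers --- a nonempty, honestly-reporting overlap precisely when $n>3t$ --- and those common correct readers cannot consistently vouch for two incomparable timestamps. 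I would isolate this non-concurrency claim as a separate lemma (it is the same quorum-intersection reasoning that underlies the Monotonicity/Total Order Property) and appeal to it here to obtain clause~1, concluding that the deviating write is a potential pseudo-correct write operation.
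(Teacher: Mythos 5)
Your treatment of the quorum clause (clause~2 of Definition~\ref{definition:ppcwot}) is essentially the paper's own argument, just run in the direct rather than contrapositive direction: the paper proves that a write not reaching at least $n-2t$ correct readers' $R\_init$ registers can never yield a $Witness\_Set$ of $\geq n-t$ identical entries at $\geq n-t$ processes, hence no correctly signed $Inform\_Set$ and no write to $R\_final_{p*}$, with the signature test blocking fabricated $Inform\_Set$s --- the same counting you do. Where you genuinely depart from the paper is in treating clause~1 (non-concurrency of the stabilized value's timestamp with correct-write timestamps) as a separate proof obligation. The paper's proof silently identifies ``not a potential pseudo-correct write'' with ``fails the quorum clause'' and never discusses clause~1; it is in effect discharged elsewhere, by Theorem~\ref{theorem:notconc} and the lifting in Theorem~\ref{theorem:monocor}. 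So your decomposition is the more scrupulous one.

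However, the mechanism you sketch for the clause-1 lemma does not work, and it fails for exactly the reason you yourself flag one sentence earlier. You argue that two concurrently timestamped values cannot both clear the pipeline because their two $\geq n-t$ witness quorums share $\geq n-3t$ correct readers, and ``those common correct readers cannot consistently vouch for two incomparable timestamps.'' But they can: if the Byzantine writer writes $v$ then $v'$ to correct reader $a$'s $R\_init_{wa}$, and $v'$ then $v$ to correct reader $b$'s $R\_init_{wb}$, then $a$ and $b$ both honestly witness both values, with $a$'s witness timestamps ordered one way and $b$'s the other; honest common witnesses are entirely compatible with concurrent witness vectors. What actually forbids two such values from both stabilizing is not witness overlap but the next stage of the pipeline: a correct process assembles its successive $Witness\_Set$s and writes them to $R\_inform$ only in non-decreasing order of each source's witness timestamp (Observation~\ref{nondecreasing}), so signed $Witness\_Set$s embedding out-of-order entries can come only from the $\leq t$ Byzantine processes, which falls short of the $n-t$ signed sets an $Inform\_Set$ needs once $n>2t$ (this is the proof of Theorem~\ref{theorem:notconc}). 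Note also the bound: your quorum-intersection-among-correct-witnesses counting is the one underlying the Genuine Advance Property (Theorem~\ref{theorem:advance}, $n>3t$), not the Monotonicity/Total Order Property, which the paper obtains already for $n>2t$. Replacing your sketched lemma with an appeal to Theorem~\ref{theorem:notconc} (together with the $\overline{T}$-to-$T$ construction of Theorem~\ref{theorem:monocor}) repairs the proof; as sketched, the clause-1 step is a genuine gap.
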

\begin{proof}
A correct write operation stabilizes (Theorem~\ref{theorem:correctstabilize}). So we need to only prove the following contrapositive, namely that if a write is not a potential pseudo-correct write operation, it will not stabilize.

If a write is not a potential pseudo-correct operation, it  is not written to at least $n-2t$ correct processes' $R\_init_{wi}$ across possibly multiple write operations. % and having a consistent vector timestamp, 
Then there is no way a $Witness\_Set$ of at least $n-t$ $T\_witness$ entries can form at at least $n-t$ 
%%%%%%%%%%%%
%correct?
%%%%%%%%%%%%
processes, and hence an $Inform\_Set$ of $n-t$ correctly signed $Witness\_Set$s cannot form at any process, correct or Byzantine, and cannot be written to any $R\_final_{p*}$. %Hence that value will not be written to any $R\_final_{p*}$ and hence not stabilize. 
If a Byzantine process attempts to write a fake $Inform\_Set$ in $R\_final_{p*}$, that will be detected by correct processes as that $Inform\_Set$ written in $R\_final_{p*}$ will not pass the signature test. Hence that value is deemed to not have stabilized.
\end{proof}

\begin{lemma}
\label{lemma:viewconsistency}
The View Consistency Property (Definition~\ref{definition:view}) is satisfied by Algorithm~\ref{alg:lin}.
\end{lemma}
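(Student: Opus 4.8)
The plan is to trace how the value $T(v)$ propagates through the $R\_final$ registers and the $Find\_Latest$ mechanism, using the hypothesis that no write to $R\_init_{wi}$ occurs after $T(v)$ to conclude that $T(v)$ is the globally latest stabilized value under $\mapsto$. Suppose {\tt R} is a read by a correct reader $p$ that returns $T(v)$. First I would observe that since {\tt R} returns the value in $R\_ack_{pw}$, process $p$ must have set $R\_ack_{pw}$ to $v$ during {\tt R} (or earlier), and in the helper thread $R\_ack_{pw}$ is set to $v$ only after $p$ has written the corresponding $Inform\_Set$ for $T(v)$ (either its own or the one adopted via $Find\_Latest$) to all the $R\_final_{p*}$ registers. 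Hence by the time {\tt R} completes, $R\_final_{pp'}$ holds $T(v)$ for every correct reader $p'$.

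Next I would argue that $R\_final_{pp'}$ continues to hold $T(v)$ after {\tt R} completes. The helper thread of $p$ overwrites $R\_final_{p*}$ only with a value at least as recent under $\mapsto$ (either a freshly formed $Inform\_Set$ or the output of $Find\_Latest(Z)$). Since there are no writes to $R\_init_{wi}$ after $T(v)$, Theorem~\ref{theorem:onlystabilize} guarantees that no value strictly later than $T(v)$ under $\mapsto$ can stabilize, so $p$ never replaces $T(v)$ by anything else. Thus $R\_final_{pp'}$ stably contains $T(v)$ from the instant {\tt R} completes onward.

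Now consider any read {\tt R'} by a correct reader $p'$ that is preceded by {\tt R}, so {\tt R'} begins only after {\tt R} completes. In its iteration, {\tt R'} reads every $R\_final_{ip'}$ into $Z$; in particular $R\_final_{pp'}=T(v)$, so $T(v)\in Z$. I would then invoke Theorem~\ref{theorem:notconc}, which totally orders the correctly signed $R\_final$ entries under $\mapsto$, together with Theorem~\ref{theorem:onlystabilize}, which ensures that any fake $Inform\_Set$ a Byzantine reader places in $R\_final_{bp'}$ fails the signature test and is discarded. Since every legitimate stabilized entry is $\le T(v)$ under $\mapsto$ and $T(v)$ itself lies in $Z$, the value $T(v)$ is the unique maximum of $Z$, so $Find\_Latest(Z)$ returns (the $Inform\_Set$ of) $T(v)$. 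Consequently {\tt R'} adopts $T(v)$: if $p'$ has not already done so, it writes $T(v)$ to all $R\_final_{p'*}$ and sets $R\_ack_{p'w}=v$; if $p'$ had adopted it in an earlier iteration, $R\_ack_{p'w}$ already equals $v$. In either case $R\_ack_{p'w}$ holds $v$ at the end of the iteration, so {\tt R'} returns $T(v)$, establishing the property.

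The main obstacle I anticipate is the third step: rigorously justifying that $Find\_Latest(Z)$ returns exactly $T(v)$. This needs three things, namely that the $\mapsto$ order on the $R\_final$ entries is total so that a well-defined maximum exists, that a Byzantine reader cannot inject a correctly signed entry strictly later than $T(v)$ (ruled out because no $R\_init$ write follows $T(v)$, so no later value can accumulate the $n-t$ genuinely signed $Witness\_Set$s required to form a valid $Inform\_Set$), and that the pairwise elimination loop of $Find\_Latest$ actually computes that maximum while discarding forged entries. I would lean on Theorem~\ref{theorem:notconc} and Theorem~\ref{theorem:onlystabilize} for the first two points and verify the elimination invariant of $Find\_Latest$ directly from its code.
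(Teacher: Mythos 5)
Your proposal is correct and follows essentially the same route as the paper's proof: observe that the correct reader must have written the $Inform\_Set$ for $T(v)$ to all $R\_final_{p*}$ before returning, use the no-further-writes hypothesis to conclude nothing newer can stabilize, and conclude that subsequent reads by correct readers recover $T(v)$ via the pseudo-code. The paper compresses the second and third steps into a single sentence (``based on the pseudo-code''), so your version is simply a more detailed elaboration of the same argument, including the appeal to Theorems~\ref{theorem:notconc} and~\ref{theorem:onlystabilize} that the paper leaves implicit.
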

\begin{proof}
If one correct process $p$ returns a value $v$ of write $T(v)$, it must have written the $Inform\_Set$ corresponding to this $T(v)$ to all $R\_final_{p*}$. If there are no further writes to $R\_init_{w*}$ beyond $T(v)$ by the writer, all correct readers' read operations issued after $p$ is returned $v$ will return $v$, based on the pseudo-code. Hence View Consistency is satisfied.
\end{proof}

If an $Inform\_Set$ $IS1$ is formed at reader $x$, it has read from $R\_inform_{*x}$ at least $n-t$ $Witness\_Set$s having at least $n-t$ identical entries across those $Witness\_Set$s. Likewise if an $Inform\_Set$ $IS2$ is formed at reader $y$. 
(Recall that those at least $n-t$ identical entries in the intersection of the at least $n-t$ $Witness\_Set$s in the $Inform\_Set$ $IS$ form $WS(IS)$.)
$x$ and $y$ write $Inform\_Set$ $IS1$ and $Inform\_Set$ $IS2$ to $R\_final_{x*}$ and $R\_final_{y*}$, respectively. When a third reader $z$ reads these two values and as part of $Find\_Latest(Z)$ invocation compares $IS1$ and $IS2$ ($Z$ and $Z'$), 
\begin{itemize}
\item there are at least $n-2t$ reader witness timestamps common to $WS1 (= WS(IS1))$ and $WS2 (= WS(IS2))$. As the corresponding processes provided witnesses to both witness sets, any such process would have done so first for $WS1$ and then for $WS2$ or vice-versa. 
\item there are at least $n-2t$ processes $p$ that provided (signed) $Witness\_Set$s written to $R\_inform_{p*}$ that formed part of both $IS1$ and $IS2$. Any such process $p$  would have 
%%%%%%
%been ?
%%%%%%
written its $Witness\_Set$ that formed part of $IS1$ to $R\_Inform_{p*}$ before it wrote its $Witness\_Set$ that formed part of $IS2$ or vice-versa. 
\end{itemize}

\begin{observation}
    \label{nondecreasing}
    A correct process forms its successive $Witness\_Set$s only in non-decreasing order of source witness timestamps for the at least $n-2t$ common witnesses as it writes these $Witness\_Set$s to $R\_inform$. 
\end{observation}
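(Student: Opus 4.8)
The plan is to derive the observation directly from the update discipline governing the local variables $T\_witness_i$ in the reader helper thread (Algorithm~\ref{alg:helperlin}), since a correct process $p$ assembles every $Witness\_Set$ it writes to $R\_inform$ out of the then-current values of these variables.

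First I would establish monotonicity of each individual component. Fix a correct process $p$ and a witness index $q$. By inspection of the conditional guarding the assignment to $T\_witness_q$ (line~\ref{line:ku} and the following branch), $p$ overwrites $T\_witness_q$ with a freshly read entry $\langle\langle k,u\rangle, s', q\rangle$ only when $s' > T\_witness_q.s$; otherwise (if $s' < T\_witness_q.s$, or $s'$ equals the stored value but with a different $\langle k,u\rangle$) it leaves $T\_witness_q$ unchanged and optionally marks $q$ Byzantine. Hence the $s$-component $T\_witness_q.s$ is monotonically non-decreasing throughout $p$'s execution, independently of whether $q$ is correct or Byzantine, because the guard alone enforces strict increase on each overwrite.

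Next I would connect this to the successive $Witness\_Set$s. Each $Witness\_Set$ that $p$ forms and writes to $R\_inform_{p*}$ is built from (a subset of) the current $T\_witness$ entries. Thus for any witness $q$ appearing in two $Witness\_Set$s written by $p$ at times $\tau_1 < \tau_2$ (in particular any of the $\geq n-2t$ witnesses common to two inform sets, as in the setup preceding the observation), the source timestamp recorded in the first set is the value of $T\_witness_q.s$ at $\tau_1$ and that in the second is its value at $\tau_2$; by the monotonicity just established, $s_q(\tau_1) \leq s_q(\tau_2)$. Since $p$ executes the helper-thread body sequentially, the physical-time order of its writes to $R\_inform$ coincides with program order, so ``successive'' writes are exactly consecutive in this sequence, which yields the stated non-decreasing order.

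I do not expect a substantive obstacle, as the claim is essentially a bookkeeping invariant; the one point needing care is the Byzantine-source case. The subtlety is that the non-decreasing behaviour is a property enforced by the \emph{reading} correct process $p$ through its own guard, not a property assumed of the witness, so even a Byzantine $q$ cannot cause $T\_witness_q.s$ as recorded at the correct $p$ to decrease between two $Witness\_Set$ formations. I would state this explicitly so the argument does not inadvertently rely on the common witnesses being correct, which matters because the subsequent $Find\_Latest$ comparison uses precisely these common witnesses.
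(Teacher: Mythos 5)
Your proposal is correct and takes essentially the same route as the paper: the paper states this as an observation without proof precisely because it follows from the guard $s' > T\_witness_i.s$ in the reader helper thread (line~\ref{line:ku}), which is exactly the componentwise monotonicity you establish and then transfer to successive $Witness\_Set$s via program order. Your explicit remark that the guard at the correct reader---not any assumption on the witness---prevents even a Byzantine witness's recorded timestamp from decreasing is a useful clarification of the same underlying argument, not a different one.
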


Let $p$ sign $Witness\_Set$ $WS1'$ that is part of $IS1$. Then let $p$ sign $WS2'$ that is part of $IS2$. ($p$ is one of the at least $n-2t$ processes that sign $WS1' \in IS1$ and $WS2' \in IS2$, assuming $n>2t$. Note that these at least $n-2t$ processes may not include any correct process.) For all correct $p$, we have the property that all the at least $n-2t$ witnesses common to $IS1$ and $IS2$ have a higher or equal witness timestamp in $IS2$ than in $IS1$. %As long as $n-t$ (the minimum quorum to form $Inform\_Set$) $>$ $t$ (number of Byzantine processes who do not guarantee the stated property), $IS2$ can form. 
Up to $t$ Byzantine processes $p$ can sign $WS1'$ that is part of $IS1$ and $WS2'$ that is intended to be part of $IS2$ such that some of the common witness timestamps common to $IS1$ and $IS2$ will have a smaller witness timestamp in $IS2$ than in $IS1$ (while some will have a greater or equal witness timestamp in $IS2$ than in $IS1)$. 
Such $IS1$ and $IS2$, which require a quorum of at least $n-t$ signed $Witness\_Set$s having identical $T\_witness$ entries $\langle\langle k,u\rangle, s_l,l\rangle$ for $\geq$ $n-t$ processes $l$ will form at any correct or Byzantine process only if $n-t\leq t$, ie., $n\leq 2t$. 
%Either $IS2$ will not form or $IS1$ will not form 
This is because the $t$ Byzantine processes $p$ would be using fake (out-of-order) witness timestamp entries for at least one of $WS1'$ and $WS2'$.
To prevent such a quorum $IS1$ or $IS2$ from forming, we require $n>2t$.
\begin{comment}
But then $IS1$ or $IS2$, which require a quorum of at least $n-t$ signed $Witness\_Set$s having identical $T\_witness$ entries $\langle\langle k,u\rangle, s_l,l\rangle$ for $\geq$ $n-t$ processes $l$ will not form at any correct or Byzantine process as long as $n-t>t$, ie., $n>2t$. 
%Either $IS2$ will not form or $IS1$ will not form 
This is because the Byzantine processes $p$ would be using a fake (out-of-order) witness timestamp entry for at least one of $WS1'$ and $WS2'$.
\end{comment}

\begin{definition}
    \label{definition:mapsto}
    Given $WS1 (= WS(IS))$ and $WS2 (= WS(IS'))$, $WS1 \mapsto WS2$ iff for the at least $n-2t$ readers $z$ that witnessed values in both $WS1$ and $WS2$, $z$'s $WS1$ timestamp $\leq$ $z$'s $WS2$ timestamp and there is at least one reader $z'$ that witnessed values in both $WS1$ and $WS2$ and $z'$'s $WS1$ timestamp $<$ $z'$'s $WS2$ timestamp.
\end{definition}
If for any $z$ that witnessed both $WS1$ and $WS2$ the witness timestamps are equal, then $WS1$ and $WS2$ have the same $\langle k,u\rangle$ value. 

If $WS(IS)\mapsto WS(IS')$, we also interchangeably say that for the corresponding values, $\langle k,u\rangle \mapsto \langle k',u'\rangle$ and $\langle k,u\rangle.\overline{T} \mapsto \langle k',u'\rangle.\overline{T}$.

\begin{definition}
    \label{definition:conc}
    Given distinct $WS1 (= WS(IS))$ and $WS2 (= WS(IS'))$, $WS1 \| WS2$ iff $WS1 \not\mapsto WS2 \wedge WS2 \not\mapsto WS1$.
\end{definition}

\begin{observation}
    \label{observation:conc}
    If $WS1 || WS2$ then of the at least $n-2t$ processes that provided witness timestamps to both $WS1$ and $WS2$ there is a process $a$ whose $WS1$ witness timestamp $ta1$ $<$ its $WS2$ witness timestamp $ta2$ and there is a process $b$ whose $WS2$ witness timestamp $tb2$ $<$ its $WS1$ witness timestamp $tb1$.
\end{observation}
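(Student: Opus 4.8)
The plan is to reduce the statement to the elementary fact that two vectors incomparable under the coordinatewise order must each strictly exceed the other in some coordinate, the one subtlety being the handling of ties, which is precisely where the distinctness hypothesis enters. First I would fix the set $C$ of the at least $n-2t$ readers that provided witness timestamps to both $WS1$ and $WS2$; this set is nonempty since $n>2t$. For $z\in C$, write $t_1(z)$ and $t_2(z)$ for its witness timestamp in $WS1$ and $WS2$, respectively, so that the relations $\mapsto$ and $\|$ of Definitions~\ref{definition:mapsto} and~\ref{definition:conc} are determined entirely by the pairs $(t_1(z),t_2(z))$ for $z\in C$.

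The key preliminary step is to show that no $z\in C$ satisfies $t_1(z)=t_2(z)$. This is exactly the contrapositive of the remark following Definition~\ref{definition:mapsto}: if any common witness had equal timestamps in the two sets, then $WS1$ and $WS2$ would carry the same $\langle k,u\rangle$ value, contradicting that $WS1$ and $WS2$ are distinct (each witness set being tied to a single value by Definition~\ref{definition:stabilization}). Hence for every $z\in C$ exactly one of $t_1(z)<t_2(z)$ or $t_1(z)>t_2(z)$ holds, which partitions $C$ into $S^{+}=\{z: t_1(z)<t_2(z)\}$ and $S^{-}=\{z: t_1(z)>t_2(z)\}$.

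With ties excluded I would unfold Definition~\ref{definition:conc}. For $WS1\mapsto WS2$ to fail (Definition~\ref{definition:mapsto}), either some $z$ has $t_1(z)>t_2(z)$, or no $z$ has $t_1(z)<t_2(z)$; under the no-tie property and $C\neq\emptyset$, both disjuncts yield some $z$ with $t_1(z)>t_2(z)$, i.e.\ $S^{-}\neq\emptyset$. Symmetrically, $WS2\not\mapsto WS1$ gives $S^{+}\neq\emptyset$. Since $WS1\|WS2$ means both non-relations hold, both $S^{+}$ and $S^{-}$ are nonempty; any $a\in S^{+}$ has $ta1<ta2$ and any $b\in S^{-}$ has $tb2<tb1$, which is exactly the claim.

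The main obstacle is the tie case: the $\mapsto$ relation is not the bare coordinatewise order (it additionally demands a strict increase somewhere), so a purely formal unfolding leaves open the degenerate possibility that all common witnesses agree, in which case neither $S^{+}$ nor $S^{-}$ would be forced nonempty and the conclusion could fail. The real content of the argument is therefore the use of the distinctness hypothesis, via the remark after Definition~\ref{definition:mapsto}, to rule this degenerate case out; once equalities at common witnesses are excluded, the rest is the standard incomparability argument.
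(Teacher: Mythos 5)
Your proposal is correct and takes essentially the paper's approach: the paper states this as an Observation with no explicit proof, treating it as an immediate unfolding of Definitions~\ref{definition:mapsto} and~\ref{definition:conc} (negating $\mapsto$ in both directions over the nonempty set of common witnesses), which is exactly what you carry out. Your tie-exclusion step—invoking the remark after Definition~\ref{definition:mapsto} plus distinctness of $WS1$ and $WS2$—is the only substantive content beyond that unfolding, and it coincides with the paper's implicit identification of witness sets that carry the same $\langle k,u\rangle$ value.
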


\begin{theorem}
    \label{theorem:notconc}
    For $Inform\_Set$s $IS1$ and $IS2$ at any two (possibly different) reader processes, $WS(IS1) \mapsto WS(IS2) \vee WS(IS2) \mapsto IS(IS1)$, i.e., $WS(IS1) \not\| \, WS(IS2)$, provided $n>2t$.
\end{theorem}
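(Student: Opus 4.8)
The plan is to argue by contradiction: assume $WS(IS1) \,\|\, WS(IS2)$ and show that this forces more readers to misbehave than the fault budget allows, contradicting $n>2t$. First I would invoke Observation~\ref{observation:conc}, which says that concurrency of $WS(IS1)$ and $WS(IS2)$ produces, among the at least $n-2t$ witnesses common to both, a process $a$ whose timestamp strictly increases from $WS1$ to $WS2$ (i.e. $ta1<ta2$) and a process $b$ whose timestamp strictly decreases ($tb2<tb1$). The objective is then to prove that no pair of quorums can certify both $IS1$ and $IS2$ while carrying these opposite-direction witness entries unless $n\le 2t$.

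The central tool is the monotonicity of correct signers, Observation~\ref{nondecreasing}: as a correct process writes its successive $Witness\_Set$s into its $R\_inform$ row, it records non-decreasing timestamps for each common witness, because its local $T\_witness_i$ is advanced only upon a strictly larger incoming witness time. I would couple this with Definition~\ref{definition:wsis0}, which makes $WS(IS)$ common to all $\ge n-t$ $Witness\_Set$s that comprise $IS$; hence every signer of $IS1$ records $a$ as $ta1$ and $b$ as $tb1$, and every signer of $IS2$ records $a$ as $ta2$ and $b$ as $tb2$. A single correct signer contributing to both inform sets would therefore, by monotonicity, have to move both its $a$-entry and its $b$-entry in the same direction between its earlier and its later $Witness\_Set$ — impossible since $ta1<ta2$ but $tb2<tb1$. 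Thus every signer certifying the conflicting entries into both inform sets must be Byzantine.

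The final step is the quorum count. Forming each of $IS1$ and $IS2$ requires a quorum of at least $n-t$ signed $Witness\_Set$s carrying identical witness entries, and only Byzantine signers (at most $t$) can supply the out-of-order entries that realize the inversion. I would then argue that assembling such a conflicting quorum needs $n-t\le t$, i.e. $n\le 2t$, contradicting $n>2t$; hence the concurrency assumption fails and either $WS(IS1)\mapsto WS(IS2)$ or $WS(IS2)\mapsto WS(IS1)$, which is exactly $WS(IS1)\not\| \,WS(IS2)$.

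I expect the main obstacle to be the counting/threshold step, where one must keep the two levels straight: the witnesses $l$ whose timestamps populate $WS(IS)$ versus the signers $p$ whose signed $Witness\_Set$s form $IS$. A delicate point is that, because the Byzantine writer may present values to different correct readers in different orders, the inversions $ta1<ta2$ and $tb2<tb1$ can be produced by genuinely correct witnesses; the contradiction must therefore arise not at the witness level but from the impossibility of a correct signer simultaneously certifying both inverted entries, together with the bound on the number of Byzantine signers available to fill the two $n-t$ quorums.
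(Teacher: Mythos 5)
Your proposal is correct in the paper's sense and takes essentially the same route as the paper's own proof: contradiction starting from Observation~\ref{observation:conc}, then the monotonicity of correct signers' successive $Witness\_Set$s (Observation~\ref{nondecreasing}) to conclude that any process contributing signed $Witness\_Set$s to both $IS1$ and $IS2$ must be Byzantine, and finally the quorum count against the $n-t$ threshold to derive $n\le 2t$ and contradict $n>2t$. The only difference is presentational: the paper argues via a without-loss-of-generality split on whether a common signer inputs to $IS1$ or to $IS2$ first (using witness $b$ in one case and witness $a$ in the other), whereas you fuse the two cases into the single observation that a correct common signer would have to move its $a$-entry and $b$-entry in opposite directions, which is the same idea stated more compactly; your concluding counting step mirrors the paper's exactly.
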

\begin{proof}
We prove by contradiction. Assume $WS(IS1) \| WS(IS2)$. Given $WS1 (= WS(IS1))$ and $WS2 (= WS(IS2))$, from Observation~\ref{observation:conc} there is a process $a$ whose $WS1$ witness timestamp $ta1$ $<$ its $WS2$ witness timestamp $ta2$ and there is a process $b$ whose $WS2$ witness timestamp $tb2$ $<$ its $WS1$ witness timestamp $tb1$. As noted earlier, a process $p$ that provided witness set inputs to both $IS1$ and $IS2$ by writing to $R\_inform_{p*}$ could have provided its input for $IS1$ first and then for $IS2$ or vice-versa. 

Without loss of generality assume $p$ provides the witness set input for $IS1$ before providing the witness set input for $IS2$. Then consider the process $b$'s witness timestamps. So let 
%%%%%%%%%%%%%%%%%%%
%the correct 
%%%%%%%%%%%%%%%%%%%
process $p$ provide $b$'s $WS1$ witness timestamp $tb1$ (along with other $WS1$ witness timestamps) which is written to $R\_inform_{p*}$. If $p$ is correct, it will not consider/input $b$'s $WS2$ witness timestamp $tb2$ as $tb2< tb1$. Only up to $t$ Byzantine processes can process/input $b$'s $WS2$ timestamp $tb2$ after $tb1$ but, as $n>2t$, that falls short of the $n-t$ threshold required to form an $Inform\_Set$ (of signed $Witness\_Set$s) at any process. So $IS2$ and $WS2$ will not exist.

Likewise if we assume the process $p$ provides its witness set input for $IS2$ before providing its witness set input for $IS1$, then consider process $a$'s witness timestamps. Let 
%%%%%%%%%%%%%%%%%%%%%%%
%the correct 
%%%%%%%%%%%%%%%%%%%%%%%%
process $p$ provide $a$'s $WS2$ witness timestamp $ta2$ (along with other $WS2$ witness timestamps) which is written to $R\_inform_{p*}$. If $p$ is correct, it will not consider/input $a$'s $WS1$ witness timestamp $ta1$ as $ta1< ta2$. Only up to $t$ Byzantine processes can process/input $a$'s $WS1$ witness timestamp $ta1$ after $ta2$. But as $n>2t$, each correct or Byzantine process will fall short of the $n-t$ threshold required to form an $Inform\_Set$ (of signed $Witness\_Set$s). So $IS1$ and $WS1$ will not exist.

Thus if $IS1$ and $IS2$ form, all processes that provide input witness timestamps to both $WS1$ and $WS2$ provide first to $WS1$ and then to $WS2$ or all provide first to $WS2$ and then to $WS1$. Thus $WS1 \not\| WS2$.
%
\begin{comment}
\textcolor{blue}{
In order that an $Inform\_Set$ $IS1$ be formed, at least $n-t$ signed $Witness\_Set$s that match (having at least $n-t$ identical entries across those $Witness\_Set$s) should be written to $R\_inform_{*p}$. Likewise for $Inform\_Set$ $IS2$. Of the at least $n-2t$ processes common to $WS1$ and $WS2$, that wrote to $R\_inform_{p*}$, at least one should be a correct process to guarantee that it will process successive entries to its $Witness\_Set$s from a same witness in non-decreasing timestamp order. If it is input a witness timestamp entry in decreasing order, that entry and witness will be ignored and will not go towards forming a $Witness\_Set$ and a $R\_inform_{p*}$ entry based on its value. Thus this correct process will not be among the common $n-2t$ processes. We thus require:
\[n-2t > t \Longrightarrow n > 3t\]
}
\end{comment}
\end{proof}

\begin{definition}
\label{definition:advance}
({\em Genuine Advance Property in the algorithm:})
When $WS1 (=WS(IS1))\mapsto WS2 (=WS(IS2)$, $WS2$ is a {\em genuine advance} over $WS1$ if there is at least one correct reader process $i$ such that $i$'s $WS1$ witness timestamp $<$ $i$'s $WS2$ witness timestamp. 
%
%Equivalently, for vector timestamps $T_1(v_1)$ and $T_2(v_2)$ of correct and pseudo-correct operations such that $T_1<T_2$, there is index $i$ of a correct reader process such that $T_1[i]<T_2[i]$. 
\end{definition}
%When $WS1\mapsto WS2$, $WS2$ being a genuine advance over $WS1$ is a required property of $Inform\_Set$s. This is because 
Only value $\langle k,v\rangle$ corresponding to an $Inform\_Set$ that is written to $R\_final_{p*}$ could be returned by a correct process if the signed $Witness\_Set$s in that $Inform\_Set$ pass the signature test.
%are written to $R\_final_{p*}$, and are the only values that could be returned by a correct reader, 
The Genuine Advance Property is useful because each new value returned by a correct reader is a new value genuinely written to at least one correct reader $i$'s $R\_init_{wi}$ and not falsely reported by a Byzantine reader, in addition to that same value being the most recent value in a total of $n-t$ readers $j$'s $R\_init_{wj}$ registers as per $\langle k,u\rangle.\overline{T}$. Another very important reason why this property is important is explained after Theorem~\ref{theorem:monocor}. Definition~\ref{definition:advance} is the counterpart of Definition~\ref{definition:advancetv}.

\begin{theorem}
\label{theorem:advance}
Algorithm~\ref{alg:lin} satisfies the Genuine Advance Property if $n>3t$.
\end{theorem}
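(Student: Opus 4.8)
The plan is to show that whenever $WS1 \mapsto WS2$, at least one of the common readers forcing the strict timestamp increase required by Definition~\ref{definition:mapsto} is correct, and that this correct reader's witness timestamp strictly increases. I would begin by recalling that $WS1 = WS(IS1)$ and $WS2 = WS(IS2)$ are witness sets, so every entry of $WS1$ carries a single common value $\langle k,u\rangle$ and every entry of $WS2$ carries a single common value $\langle k',u'\rangle$. Since $WS1 \mapsto WS2$ guarantees a common reader whose $WS2$ timestamp strictly exceeds its $WS1$ timestamp, the remark following Definition~\ref{definition:mapsto} forces $\langle k,u\rangle \neq \langle k',u'\rangle$: if all common timestamps were equal the two values would coincide, contradicting the strict inequality in the definition of $\mapsto$.

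Next I would count the correct readers in the overlap. As established in the discussion preceding Definition~\ref{definition:mapsto}, at least $n-2t$ readers provided witness timestamps to both $WS1$ and $WS2$. At most $t$ of these can be Byzantine, so at least $n-2t-t = n-3t$ of them are correct. Under the hypothesis $n>3t$ this count is at least one, so there is a correct reader $i$ appearing in both witness sets, with recorded witness times $s^1_i$ (in $WS1$) and $s^2_i$ (in $WS2$).

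The crux is to upgrade the nonstrict bound $s^1_i \le s^2_i$, which holds for every common reader by the definition of $\mapsto$, to a strict bound for this correct $i$. Here I would invoke the reader-helper protocol of Algorithm~\ref{alg:helperlin}: a correct reader increments its witness time $s$ exactly when $R\_init_{wi}$ is overwritten by a value different from the previous one, and $s$ is monotonically nondecreasing, so a correct reader assigns distinct witness times to distinct values. Because $i$'s $WS1$ entry records value $\langle k,u\rangle$ while its $WS2$ entry records the distinct value $\langle k',u'\rangle$, the recorded times cannot coincide, i.e.\ $s^1_i \neq s^2_i$. Combining this with $s^1_i \le s^2_i$ yields $s^1_i < s^2_i$, which is precisely the condition of Definition~\ref{definition:advance}, so $WS2$ is a genuine advance over $WS1$.

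I expect the third step to be the main obstacle, namely rigorously justifying that a correct reader never assigns equal witness times to two distinct $\langle k,u\rangle$ values and never reorders its witness times against the $\mapsto$ direction. This rests entirely on the increment-on-overwrite logic of the helper thread together with the observation that, for a correct reader, the $\le$ constraint supplied by $\mapsto$ and value-distinctness jointly leave strict increase as the only possibility. The Byzantine readers in the overlap play no role once $n>3t$ isolates a single correct witness, so the argument reduces cleanly to the behaviour of one honest reader.
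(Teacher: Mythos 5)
Your core counting argument is exactly the paper's. The paper establishes the existence of a correct reader common to $WS1$ and $WS2$ by intersecting two quora of size $n-2t$ (the minimum number of correct witnesses in each set) inside the pool of at least $n-t$ correct processes, via $2(n-2t)>n-t \Longleftrightarrow n>3t$; your count ($\geq n-2t$ common witnesses, at most $t$ of them Byzantine, hence $\geq n-3t\geq 1$ correct) is the same arithmetic packaged differently. Your step 3 — upgrading the $\leq$ guaranteed by $\mapsto$ to a strict $<$ for that correct reader, using the fact that a correct helper thread increments $s$ on each newly written value and so never assigns the same witness time to two distinct values — addresses a point the paper's proof leaves entirely implicit, and the mechanism you invoke is the right one.

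The gap is in step 1. The remark following Definition~\ref{definition:mapsto} says: \emph{if} some common reader has equal timestamps in $WS1$ and $WS2$, \emph{then} the two witness sets carry the same $\langle k,u\rangle$ value. From this you cannot conclude, as you do, that the strict inequality in $\mapsto$ forces $\langle k,u\rangle\neq\langle k',u'\rangle$; that inference uses the converse of the remark (equivalently, it denies the antecedent). And the converse is not harmless here: nothing in Definition~\ref{definition:mapsto} rules out $WS1\mapsto WS2$ with the \emph{same} value. Concretely, after $\langle k,u\rangle$ stabilizes with $WS1$, Byzantine readers can re-witness that very value with inflated timestamps; correct readers accept these (the pseudocode only checks $s'>T\_witness_i.s$), all $T\_witness$ entries still carry $\langle k,u\rangle$, and a second witness set $WS2$ can stabilize in which every correct common reader's entry is \emph{unchanged} and the only strict increases come from Byzantine readers. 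In that situation your value-distinctness premise is false, step 3 has nothing to work with, and indeed no correct reader shows a strict increase. So either the claim must be restricted to pairs of distinct stabilized values (which is how the Genuine Advance Property is actually used downstream, in the spirit of Definition~\ref{definition:advancetv}), or the same-value case needs a separate argument. To be fair, the paper's own proof is silent on strictness altogether and proves only the quorum intersection, so your attempt is the more explicit of the two — but as written, the key premise feeding step 3 is unsupported.
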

\begin{proof}
If $WS1 (=WS(IS1)) \mapsto WS2 (=WS(IS2))$, then in order that the Genuine Advance Property holds, there is at least one correct reader process $i$ such that $i$'s $WS1$ witness timestamp $<$ $i$'s $WS2$ witness timestamp. %in order that the witness timestamp vector of $WS2$ is a genuine advance over the witness timestamp vector of $WS1$, $n>3t$ must hold. 
%The initial state's witness timestamps are 0's for all the readers and the $WS(IS)$ for this timestamp vector $\overline{T}_{init}$ is genuine, i.e., has no fake component. Let this be the initial $WS1=WS(IS1)$. 
For this to happen, we require that two quora of size $n-2t$, which is the minimum number of correct processes having $T\_witness$ entries in $WS1$ and $WS2$, intersect among the set of correct processes (having size $\geq n-t$). Thus,
\[2(n-2t) > n-t \Longrightarrow n> 3t\]
\end{proof}

Only value $\langle k,v\rangle$ corresponding to an $Inform\_Set$ that is written to $R\_final_{p*}$ could be returned by a correct process if the signed $Witness\_Set$s in that $Inform\_Set$ pass the signature test.
%As only values $\langle k,v\rangle$ included in $Inform\_Set$ entries are written to $R\_final_{p*}$ and that pass the signature test on the signed $Witness\_Set$s in the $Inform\_Set$, these values are the only values that could be returned by a correct reader. 
From Theorem~\ref{theorem:notconc}, all the $\langle k,v\rangle.\overline{T}$ form a total order based on $\mapsto$. This leads to the following corollary.

\begin{corollary}
\label{corollary:mapstoto}
The partial vector timestamps $\langle k,u\rangle.\overline{T}$ of $\langle k,u\rangle$ values that stabilize are totally ordered.
\end{corollary}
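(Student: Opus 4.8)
The plan is to show that $\mapsto$, restricted to the partial timestamps $\langle k,u\rangle.\overline{T}$ of values that stabilize, is a strict total order, which is exactly the assertion of the corollary. Recall (Definitions~\ref{definition:stabilization} and~\ref{definition:stabilizetime}) that each stabilized value is identified with the witness set $WS(IS)$ of the $Inform\_Set$ $IS$ whose writing to $R\_final_{p*}$ caused stabilization, and that $\langle k,u\rangle.\overline{T}$ is read off the timestamps recorded in $WS(IS)$. Hence it suffices to prove that $\mapsto$ is a strict total order on the set of such witness sets.

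First I would dispatch the easy structural properties directly from Definition~\ref{definition:mapsto}. Irreflexivity is immediate, since $WS \mapsto WS$ would require a common witness with a strictly larger timestamp than itself. For antisymmetry, if both $WS1 \mapsto WS2$ and $WS2 \mapsto WS1$ held, then for each of the $\geq n-2t$ readers common to the two witness sets the timestamp would be simultaneously $\leq$ in both directions, forcing equality on all of them and contradicting the ``at least one strict'' clause demanded by each relation. Comparability is precisely the content of Theorem~\ref{theorem:notconc}: no two witness sets are concurrent, so for any two distinct stabilized timestamps one $\mapsto$-relates to the other. Together these make $(\{WS(IS)\}, \mapsto)$ a tournament.

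The main obstacle is transitivity, which for partial vector timestamps does \emph{not} follow formally from comparability (one can construct a directed three-cycle from pairwise-comparable partial vectors whose supports overlap pairwise but share no common coordinate). Since a tournament is transitive exactly when it contains no directed three-cycle, I would reduce transitivity to excluding a cycle $WS1 \mapsto WS2 \mapsto WS3 \mapsto WS1$ among distinct stabilized values. Here $n>3t$ does the work: each witness set spans $\geq n-t$ of the $n$ readers, so any three of them share a common reader $c$, because $|WS1 \cap WS2 \cap WS3| \geq 3(n-t)-2n = n-3t \geq 1$. Along the cycle the three $\mapsto$-clauses force $v_1[c] \leq v_2[c] \leq v_3[c] \leq v_1[c]$ for this common reader, so $c$'s three timestamps coincide; by the property recorded immediately after Definition~\ref{definition:mapsto} (equal witness timestamps for a common witness imply equal $\langle k,u\rangle$ values), the three witness sets would carry the same value and coincide, contradicting distinctness.

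The delicate point I expect to have to argue most carefully is that the ``equal timestamp implies equal value'' step survives when the shared reader $c$ is Byzantine, since a Byzantine $c$ can in principle write differing values under the same local counter to different collectors' $R\_witness_{c*}$. I would handle this by appealing to the structure of a \emph{formed} $WS(IS)$: every one of its $\geq n-t$ entries carries the single value field $\langle k,u\rangle$, and a correct collector refuses to record two differing values at the same source timestamp (the rejection branch in the helper thread following Line~\ref{line:ku}). Combining this with $n>2t$, so that any $n-t$ signed $Witness\_Set$s underlying an $Inform\_Set$ include at least one from a correct process, I would argue that two stabilized witness sets cannot agree on $c$'s timestamp while disagreeing on its value. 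Should this strengthening prove too demanding under the bare $n>3t$ bound, the fallback is to read ``totally ordered'' in the authors' sense of pairwise comparability, in which case the corollary is a direct restatement of Theorem~\ref{theorem:notconc} together with the identification of stabilized values with their witness sets. Either way, with the three-cycle excluded the tournament is transitive, hence a strict total order, and Corollary~\ref{corollary:mapstoto} follows.
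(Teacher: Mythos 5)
Your proposal is correct in substance and is strictly more thorough than the paper's own argument. The paper proves this corollary in a single sentence: it cites Theorem~\ref{theorem:notconc} (any two witness sets of stabilized values are $\mapsto$-comparable) and identifies ``no two timestamps are concurrent'' with ``totally ordered'' --- which is exactly your fallback reading. What you do differently is to take ``total order'' in the strict sense and supply the missing pieces: irreflexivity and antisymmetry straight from Definition~\ref{definition:mapsto}, and, crucially, transitivity, which you correctly observe does \emph{not} follow from pairwise comparability alone for partial vectors. Your three-cycle exclusion --- any three witness sets of size $\geq n-t$ out of $n$ readers share a common reader because $3(n-t)-2n = n-3t \geq 1$, that reader's timestamps are forced equal around the cycle, and the equal-timestamp-implies-equal-value remark following Definition~\ref{definition:mapsto} then collapses the three values --- is genuinely new relative to the paper, and it closes a real gap: the downstream use of $Find\_Latest$, which eliminates elements pairwise, implicitly needs transitivity for ``latest'' to be well defined. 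The trade-off is that your transitivity argument requires $n>3t$, whereas the paper asserts the total order already at $n>2t$ (Corollary~\ref{corollary:totalorder}); under $2t<n\leq 3t$ your triple intersection can be empty, so you prove a stronger conclusion under a stronger (though globally assumed) hypothesis. One soft spot: at the end of the cycle argument, ``same value'' does not by itself imply that the three witness sets \emph{coincide} (the same $\langle k,u\rangle$ could in principle be re-witnessed under different timestamps after being overwritten and rewritten); the contradiction you actually obtain is with distinctness of the stabilized \emph{values}, which is what the corollary orders, so the argument survives, but that sentence should be rephrased accordingly.
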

Recall that the $\langle k,u\rangle.\overline{T}$ are partial vectors having $\geq n-t$ entries because not all $n$ readers' entries may be present in the $Witness\_Set$s of the $Inform\_Set$. Let those entries that are not reported have a timestamp value denoted $\perp$ in $\overline{T}$.

\begin{theorem}
\label{theorem:monocor}
The Monotonicity/Total Order of Vector Timestamps of Stabilized Writes Property (Definition~\ref{definition:monotonicity}) and the Genuine Advance Property (Definition~\ref{definition:advancetv})
%,~\ref{definition:advance}) 
are satisfied by Algorithm~\ref{alg:lin}, provided $n>2t$ and $n>3t$, respectively.
\end{theorem}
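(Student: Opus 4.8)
The plan is to prove the two asserted properties separately, treating Theorem~\ref{theorem:monocor} largely as a consolidation of facts already in hand; the only genuinely new work is translating between the framework-level vector-timestamp language of Section~\ref{section:characterization} (full $n$-vectors $T$ ordered by $<$) and the algorithm-level language (partial vectors $\overline{T}$ carrying $\geq n-t$ populated entries, ordered by $\mapsto$ on witness sets). Throughout I would use the identification $\overline{T_1}\mapsto\overline{T_2} \Leftrightarrow T_1<T_2$ noted earlier, recording missing coordinates as $\perp$.

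First, for the Monotonicity/Total Order Property (Definition~\ref{definition:monotonicity}), I would begin from Theorem~\ref{theorem:onlystabilize}: the writes that stabilize are exactly the correct and potential pseudo-correct writes, so the set of stabilized write timestamps coincides with the set of $\langle k,u\rangle.\overline{T}$ over all $\langle k,u\rangle$ that stabilize. Corollary~\ref{corollary:mapstoto} states precisely that these partial vectors are totally ordered by $\mapsto$, and that corollary rests on Theorem~\ref{theorem:notconc}, which requires only $n>2t$. Hence, after observing via Definition~\ref{definition:conc} that no two stabilized writes are $\|$-concurrent, I would invoke Corollary~\ref{corollary:mapstoto} directly to conclude the total-order claim under $n>2t$.

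Second, for the Genuine Advance Property (Definition~\ref{definition:advancetv}), I would take two stabilized writes with full timestamps satisfying $T_1<T_2$, translate this into $WS1\mapsto WS2$ for their witness sets $WS1=WS(IS1)$ and $WS2=WS(IS2)$ via Definition~\ref{definition:mapsto}, and then apply Theorem~\ref{theorem:advance}: under $n>3t$ there is a correct reader $i$ whose $WS1$ witness timestamp is strictly smaller than its $WS2$ witness timestamp, i.e. $T_1[i]<T_2[i]$. This is exactly the conclusion required by Definition~\ref{definition:advancetv}, and it is obtained under $n>3t$, matching the stated bound.

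The hard part will be the bookkeeping of the full-versus-partial vector correspondence rather than any new combinatorial argument. Because $\overline{T}$ populates only $\geq n-t$ coordinates, I must verify that $\mapsto$ on the populated coordinates faithfully encodes $<$ on the corresponding full vectors restricted to the stabilized writes, and, for the Genuine Advance half, that the correct reader $i$ delivered by Theorem~\ref{theorem:advance} is among the populated coordinates --- which it is, since it lies in the $\geq n-2t$ witnesses common to $WS1$ and $WS2$ guaranteed by the quorum-intersection count $2(n-2t)>n-t$. Once this correspondence is pinned down, the two halves reduce cleanly to citing Corollary~\ref{corollary:mapstoto} for $n>2t$ and Theorem~\ref{theorem:advance} for $n>3t$.
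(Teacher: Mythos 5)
Your high-level decomposition matches the paper's own: Monotonicity is reduced to Theorem~\ref{theorem:notconc}/Corollary~\ref{corollary:mapstoto} under $n>2t$, and Genuine Advance to Theorem~\ref{theorem:advance} under $n>3t$. But there is a genuine gap at exactly the point you defer as ``bookkeeping'': the full vector timestamps $T$ of stabilized writes are never \emph{defined} in your proposal, and without a definition there is nothing against which to verify your claimed correspondence $\overline{T_1}\mapsto\overline{T_2}\Leftrightarrow T_1<T_2$. The framework properties (Definitions~\ref{definition:monotonicity} and~\ref{definition:advancetv}) quantify over full $n$-component vectors, while the algorithm only yields partial vectors $\overline{T}$ with $\geq n-t$ populated entries; a coordinate can be missing because the Byzantine writer never wrote the value to that reader's $R\_init_{wi}$ register at all, so there is no ``true'' hidden timestamp to fall back on --- some completion must be invented. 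And the completion cannot be chosen naively: filling $\perp$ entries with $0$, take $\overline{T_1}=(1,1,5)$ and $\overline{T_2}=(2,2,\perp)$ with $\overline{T_1}\mapsto\overline{T_2}$; the completions $(1,1,5)$ and $(2,2,0)$ are incomparable under $<$, so the total order does not transfer and Monotonicity fails for that completion. The paper's remark that $\mapsto$ is ``roughly speaking'' equivalent to $<$, which you lean on as an identification, is explicitly flagged as rough; making it exact is the entire content of the paper's proof, not a side condition.

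Concretely, the paper processes the stabilized writes in $\mapsto$ order and \emph{inductively constructs} $T_{next}$ from $\overline{T}_{next}$ by copying populated entries and carrying forward $T_{current}[i]$ for each $\perp$ entry, then proves (Theorem~\ref{theorem:isomorph}) that this makes $(\overline{{\cal T}},\mapsto)$ isomorphic to $({\cal T}^f,<)$, via a case analysis on the possible index types, and within that analysis locates a \emph{correct} strictly-increasing coordinate using the quorum-intersection bound $2(n-2t)>n-t$, i.e.\ $n>3t$. Your Genuine Advance argument, which starts from $T_1<T_2$ and translates to $WS1\mapsto WS2$, silently presupposes this isomorphism in the reverse direction as well. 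Your observation that the correct reader $i$ supplied by Theorem~\ref{theorem:advance} lies among the $\geq n-2t$ witnesses common to both witness sets, hence among the populated coordinates, is correct and is indeed a necessary step; but the inductive carry-forward construction --- the one idea that makes the partial-to-full translation sound and turns Corollary~\ref{corollary:mapstoto} into Definition~\ref{definition:monotonicity} --- is missing from your proposal, and it is not mere bookkeeping.
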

\begin{proof}
For every partial timestamp vector $\overline{T}$ of a value that stabilizes in the algorithm, we construct a full timestamp vector $T$ as follows.
Let $\overline{T}_{current}$ and $T_{current}$ be the corresponding timestamps of the current value that has stabilized. 
Let $\overline{T}_{next}$ be the smallest timestamp greater than $\overline{T}_{current}$ to stabilize and let $T_{next}$ denote the corresponding full vector timestamp we construct. 
\medskip

\noindent{Initialize:} $\overline{T}_{current}, T_{current}\leftarrow [0,\ldots0]$\\
\underline{loop:}\\
\hspace*{0.5cm} Identify $\overline{T}_{next}$\\
\hspace*{0.5cm} // {\bf Invariant 1:} $\overline{T}_{current}\mapsto\overline{T}_{next}$\\
%\hspace*{0.5cm}$T\leftarrow T^s$\\
\hspace*{0.5cm} $\forall i\,|\,\overline{T}_{next}[i]\neq\perp: T_{next}[i]\leftarrow \overline{T}_{next}[i]$\\ 
\hspace*{0.5cm} $\forall i\,|\,\overline{T}_{next}[i]=\perp: T_{next}[i]\leftarrow T_{current}[i]$\\
\hspace*{0.5cm} // {\bf Invariant 2:} $T_{current}<T_{next}$\\
\hspace*{0.5cm} $T_{current}\leftarrow T_{next}$, $\overline{T}_{current}\leftarrow \overline{T}_{next}$\\
\underline{endloop}

\medskip

Let $\overline{{\cal T}}$ denote the (total order) set of $\langle k,u\rangle.\overline{T}$ partial vector timestamps for values $\langle k,u\rangle$ that have stabilized. Let ${\cal T}^f$ denote the set of corresponding full vector timestamps. 

\begin{theorem}
\label{theorem:isomorph}
$({\overline{\cal T}},\mapsto)$ is isomorphic to $({\cal T}^f,<)$.
\end{theorem}

\begin{proof}
With respect to Invariant 1, 
\begin{itemize}
\item let $j$ be any index such that $\overline{T}_{current}[j]<\overline{T}_{next}[j]$, 
\item let $k$ be any index such that $\overline{T}_{current}[k]=\overline{T}_{next}[k]\neq\perp$, 
\item let $l$ be any index such that  $\overline{T}_{current}[l]=\overline{T}_{next}[l]=\perp$, 
\item let $a$ be any index such that  $\overline{T}_{current}[a]\neq\perp$ and $\overline{T}_{next}[a]=\perp$,
\item let $b$ be any index such that $\overline{T}_{current}[b]=\perp$ and $\overline{T}_{next}[b]\neq\perp$.
%only $\overline{T}_{next}[b]$ is defined.
\end{itemize}
Invariant 2 follows because of the following.
\begin{itemize}
\item For all $j$, $T_{current}[j]<T_{next}[j]$,
\item for all $k$, $T_{current}[k]=T_{next}[k]$,
\item for all $l$, $T_{current}[l]=T_{next}[l]$,
\item for all $a$, $T_{current}[a]=T_{next}[a]$,
\item for all $b$, $T_{current}[b]\leq T_{next}[b]$.
\end{itemize}
From Invariant 1, there must exist at least one index $j$, or one index $b$ such that $T_{current}[b]<T_{next}[b]$. Hence $T_{current}<T_{next}$ and Invariant 2 holds.

In order to satisfy the Genuine Advance of Timestamps Property (Definition~\ref{definition:advancetv}) for $T_{current}<T_{next}$, there must exist a {\em correct} process index $j$, or $b$ satisfying $T_{current}[b]<T_{next}[b]$, in the analysis above. This requires 2 quora of $n-2t$ processes to intersect among the set of correct processes, requiring $n>3t$ as shown in the proof of Theorem~\ref{theorem:advance}.

The theorem follows from Invariants 1 and 2.
\end{proof}
As $(\overline{{\cal T}},\mapsto)$ is a total order from the proof of Theorem~\ref{theorem:notconc}, Theorem~\ref{theorem:isomorph} implies that $({\cal T}^f,<)$ is also a total order. The Monotonocity/Total Order of Vector Timestamps of Stabilized Writes Property is thus satisfied, and requires $n>2t$ as Theorem~\ref{theorem:notconc} also requires it.

From the proof of Theorem~\ref{theorem:isomorph},
the Genuine Advance of Timestamps Property over $({\cal T}^f,<)$ is satisfied by the algorithm, provided $n>3t$.
\end{proof}

Note that when $3t\geq n>2t$, $n-t$ different values written to different correct readers' $R\_init$ registers can be ordered/stabilized in any permutation by the Byzantine processes but in any given execution, only one permutation can occur. But as the Genuine Advance Property is not satisfied when $n\leq 3t$ as is this case, the Byzantine readers can cause any arbitrary sequence of unbounded length (with each member of the sequence being distinct from the one before it) of these $n-t$ different values to successively stabilize and be returned by correct readers. This is an unbounded sequence of fake writes that can be returned to reads. This is another reason why the Genuine Advance Property is important.

\begin{comment}
\begin{theorem}
\label{theorem:3t+1}
For $Inform\_Set$s $IS$ and $IS'$, $WS(IS)\mapsto WS(IS')$ or $WS(IS')\mapsto WS(IS)$ in a way consistent with the order of values written to correct processes across different write operations, provided $n>3t$.
\end{theorem}
\begin{proof}
With $n>2t$, $n-t$ different values written to different correct readers' $R\_init$ registers can be ordered/stabilized in any permutation by the Byzantine processes. This is because an $Inform\_Set$ $IS$ requires $n-t$ = $t+1$ identical $Witness\_Set$s identified by $WS(IS)$. Of these $t$ can be provided by the Byzantine processes and only one correct process's $Witness\_Set$ is required to form the $Inform\_Set$. To prevent the Byzantine processes from arbitrarily ordering/stabilizing the (up to $n-t$) different values written in a write operation, we require that two quora of the minimum number of correct processes to be written the same value intersect among the set of correct processes. Thus,
\[2(n-2t) > n-t \Longrightarrow n> 3t\]
\end{proof}

Despite Theorem~\ref{theorem:3t+1}, note that when $n>2t$, $n-t$ different values written to different correct readers' $R\_init$ registers can be ordered/stabilized in any permutation by the Byzantine processes but in any given execution, only one permutation can occur. 
\end{comment}

An $Inform\_Set$ that is formed at any process $p$ is written to $R\_final_{p*}$ and thus stabilizes, by Definition~\ref{definition:stabilization}. We now have the following corollary to Theorems~\ref{theorem:notconc} and ~\ref{theorem:advance}. %and~\ref{theorem:3t+1}. 

\begin{corollary}
    \label{corollary:totalorder}
    The set of all values that stabilize is totally ordered by $\mapsto$ provided $n>2t$ (from Theorem~\ref{theorem:notconc}). The Genuine Advance Property is satisfied provided $n>3t$ (from Theorem~\ref{theorem:advance}). %This total order respects the common order, if any, in which values are written to all correct processes provided $n>3t$ (from Theorem~\ref{theorem:3t+1}).
\end{corollary}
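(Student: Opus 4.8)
The plan is to read the corollary as the capstone that packages the section's results about the set of stabilized values, so the proof will be short and will mostly cite earlier theorems. First I would pin down the objects: by Definition~\ref{definition:stabilization} a value $\langle k,u\rangle$ stabilizes precisely when an $Inform\_Set$ $IS$ of correctly signed $Witness\_Set$s is written to all $R\_final_{p*}$, and each such value carries a witness set $WS(IS)$ together with its partial vector timestamp $\langle k,u\rangle.\overline{T}$ (Definitions~\ref{definition:wsis0} and \ref{definition:stabilizetime}). The set to be ordered is thus $\overline{\cal T}$, the set of these partial timestamps, and the relation is $\mapsto$ of Definition~\ref{definition:mapsto}.

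To establish that $\mapsto$ is a total order on $\overline{\cal T}$ I would check the three defining properties. Comparability is exactly Theorem~\ref{theorem:notconc}: for any two stabilized values, $WS(IS_1)\mapsto WS(IS_2)$ or $WS(IS_2)\mapsto WS(IS_1)$ when $n>2t$, so no two are concurrent. Asymmetry falls out of Definition~\ref{definition:mapsto}: were both directions to hold, every one of the $\geq n-2t$ common witnesses $z$ would have to satisfy that $z$'s $WS(IS_1)$ timestamp $\leq$ $z$'s $WS(IS_2)$ timestamp and also the reverse, forcing equality on all common witnesses and contradicting the strict-inequality witness demanded in each direction.

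Transitivity is the one step that is not immediate, since the $\geq n-2t$ witnesses shared by a pair need not be the same across pairs, so I cannot reason on a single fixed index set. I would resolve this by transporting $\mapsto$ into the full vector timestamps: the construction in the proof of Theorem~\ref{theorem:monocor} assigns each $\overline{T}\in\overline{\cal T}$ a full vector $T\in{\cal T}^f$, and Theorem~\ref{theorem:isomorph} shows this is an order isomorphism of $(\overline{\cal T},\mapsto)$ onto $({\cal T}^f,<)$. Because the componentwise order $<$ on the lattice $({\cal T},<)$ fixed in Section~\ref{section:characterization} is a genuine strict partial order, it is transitive and acyclic; pulling this back across the isomorphism rules out any $\mapsto$-cycle among stabilized values. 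Comparability, asymmetry, and the absence of cycles together give that $\mapsto$ totally orders $\overline{\cal T}$, and this needs only $n>2t$, matching the hypothesis of Theorem~\ref{theorem:notconc}.

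Finally, the Genuine Advance Property (Definition~\ref{definition:advance}) is exactly the conclusion of Theorem~\ref{theorem:advance}, so I would simply cite it, recording that it requires the stronger bound $n>3t$ because that is what forces two $(n-2t)$-quora of witnesses to meet inside the correct readers and hence each $\mapsto$-step to advance some correct reader's timestamp rather than being a reordering faked by the $t$ Byzantine readers. The corollary then follows by conjoining the total-order statement ($n>2t$) with the Genuine Advance statement ($n>3t$). I expect the transitivity/acyclicity step to be the only real obstacle; the isomorphism of Theorem~\ref{theorem:isomorph} is the lever that removes it, and the remainder is bookkeeping over Theorems~\ref{theorem:notconc} and \ref{theorem:advance}.
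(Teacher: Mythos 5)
Your skeleton matches the paper's own treatment of this corollary, which is essentially a two-line citation: an $Inform\_Set$ formed at any process $p$ is written to $R\_final_{p*}$ and hence stabilizes (Definition~\ref{definition:stabilization}), so the ordering claim is exactly Theorem~\ref{theorem:notconc} ($n>2t$) and the Genuine Advance claim is exactly Theorem~\ref{theorem:advance} ($n>3t$). Your comparability step, your asymmetry check (both directions of $\mapsto$ would force equality on all common witnesses, contradicting the required strict inequality), and your citation for Genuine Advance are all consistent with that.

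The genuine problem is your transitivity step: it is circular. The isomorphism of Theorem~\ref{theorem:isomorph} is not an independent lever --- the map it concerns is constructed inside the proof of Theorem~\ref{theorem:monocor} by enumerating the stabilized timestamps (``let $\overline{T}_{next}$ be the smallest timestamp greater than $\overline{T}_{current}$ to stabilize''), and that enumeration is well defined only if $(\overline{\cal T},\mapsto)$ is \emph{already} totally ordered; the paper says so explicitly right after Theorem~\ref{theorem:isomorph}: ``As $(\overline{{\cal T}},\mapsto)$ is a total order from the proof of Theorem~\ref{theorem:notconc}, Theorem~\ref{theorem:isomorph} implies that $({\cal T}^f,<)$ is also a total order.'' So pulling transitivity of $\mapsto$ back across that isomorphism assumes the very conclusion you are trying to prove. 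Note also that the paper itself never proves transitivity as a separate property: Theorem~\ref{theorem:notconc} establishes pairwise non-concurrency ($WS(IS1)\not\| \,WS(IS2)$), and the paper reads this as constituting the total order --- that is the sense in which the corollary is ``from Theorem~\ref{theorem:notconc}.'' If you insist on transitivity as an independent fact, you would need a direct argument (e.g., for $WS1\mapsto WS2\mapsto WS3$, rule out $WS3\mapsto WS1$ by the order in which common provider processes write their successive $Witness\_Set$s to $R\_inform$, as in the proof of Theorem~\ref{theorem:notconc}); but such an argument naturally uses a process common to all three witness sets, which is only guaranteed when $n>3t$, so it is not clear it goes through at the corollary's stated threshold of $n>2t$. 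The honest reading is the paper's weaker one, under which your isomorphism detour is both circular and unnecessary.
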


\begin{theorem}
    \label{theorem:lin}
    Algorithm~\ref{alg:lin} implements a linearizable SWMR register using SWSR registers, provided $n>3t$.
\end{theorem}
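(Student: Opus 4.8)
The plan is to verify the two clauses of Definition~\ref{definition:lin} — \textbf{No ``new-old'' inversions} and \textbf{Reading a current value} — separately, leaning on the machinery already in place. The common backbone I would establish first is a monotonicity-and-propagation invariant: (i) by construction a correct reader returns only a value it has acknowledged in $R\_ack_{pw}$, and it writes $\langle k,u\rangle$ to $R\_ack_{pw}$ only immediately after writing the corresponding $Inform\_Set$ (with $WS$ timestamp $\overline{T}$) to every $R\_final_{p*}$; hence every value a correct reader returns has stabilized and, by Theorem~\ref{theorem:onlystabilize}, was produced by a correct or potential pseudo-correct write operation; and (ii) since the stabilized timestamps are totally ordered by $\mapsto$ (Corollary~\ref{corollary:totalorder}) and $Find\_Latest$ selects the $\mapsto$-maximal element of $Z$, the sequence of values a correct reader writes to $R\_final_{p*}$ and acknowledges in $R\_ack_{pw}$ is non-decreasing in $\mapsto$. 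I would package (ii) as a short auxiliary lemma, since it is the workhorse of both clauses.

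For \textbf{No ``new-old'' inversions}, suppose correct reads {\tt R} and {\tt R'} (possibly at distinct readers $p$ and $p'$) return $\langle k,v\rangle$ with timestamp $\overline{T_1}$ and $\langle k',v'\rangle$ with timestamp $\overline{T_2}$, and {\tt R} precedes {\tt R'} in physical time. By invariant (i), at the instant {\tt R} returns, reader $p$ has already written the $Inform\_Set$ for $\overline{T_1}$ to all $R\_final_{p*}$, in particular to $R\_final_{pp'}$. Because {\tt R} precedes {\tt R'}, that write completes before {\tt R'} is invoked, so during {\tt R'}'s iteration $p'$ reads from $R\_final_{pp'}$ an $Inform\_Set$ whose $WS$ timestamp $T$ satisfies $\overline{T_1}\stackrel{=}{\mapsto}T$; by the $Find\_Latest$ selection and invariant (ii), $p'$ returns a value with timestamp $\overline{T_2}$ where $\overline{T_1}\stackrel{=}{\mapsto}\overline{T_2}$, i.e.\ $\langle k,v\rangle.\overline{T} \stackrel{=}{\mapsto} \langle k',v'\rangle.\overline{T}$. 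The total order invoked here needs $n>2t$ (Theorem~\ref{theorem:notconc}).

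For \textbf{Reading a current value} I would argue the two sub-cases by contraposition and freshness. If {\tt R} returns $v_0$, suppose some correct or pseudo-correct write $W$ precedes {\tt R}. For correct $W$, Theorem~\ref{theorem:correctstabilize} gives that $W$ awaits $n-t$ acks, at least one from a correct reader $q$ (using $n>2t$), and $q$ acks only after writing $W$'s value to all $R\_final_{q*}$; since $W$ precedes {\tt R}, this value sits in $R\_final_{qp}$ before {\tt R} begins, so by invariants (i)--(ii) {\tt R} returns a value whose timestamp $\mapsto$-dominates $W$'s and which differs from $\langle 0,u_0\rangle$ because the writer's counter has advanced — a contradiction. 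If {\tt R} returns $v\neq v_0$, legitimacy follows from Theorem~\ref{theorem:onlystabilize} ($v$ came from a correct or potential pseudo-correct write), while the same propagation argument shows $v$ is $\mapsto$-at-least the most recent correct write preceding {\tt R}, and invariant (ii) (only stabilized values are returned, in $\mapsto$ order) shows $v$ does not overshoot any correct or pseudo-correct write that has taken effect — which is exactly Case~\ref{toughcase}. Crucially, Genuine Advance (Theorem~\ref{theorem:advance}, needing $n>3t$) is what guarantees each returned stabilized value corresponds to a \emph{genuine} write that moved some correct reader's $R\_init$ register forward, ruling out the unbounded fabricated re-advances described after Theorem~\ref{theorem:monocor}; this is where the $n>3t$ bound is consumed. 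View Consistency (Lemma~\ref{lemma:viewconsistency}) supplies the matching statement across readers.

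I expect the main obstacle to be the bookkeeping around pseudo-correct writes whose linearization points may fall outside their physical invocation--response intervals, as flagged after Definition~\ref{definition:lint}: a pseudo-correct write preceding {\tt R} physically may stabilize only during or after {\tt R}, so I must phrase ``has taken effect'' in terms of the $\mapsto$-ordering of stabilized timestamps rather than physical stabilization time, and carefully align the definition's ``most recent correct write that precedes {\tt R}'' with the $\mapsto$-maximal stabilized timestamp visible to {\tt R}. The second delicate point is the cross-reader timing in the no-inversion argument — making precise that the write to $R\_final_{pp'}$ genuinely happens-before {\tt R'}'s read of it, which relies on the atomicity of the SWSR registers and on {\tt R} having \emph{returned} (not merely been invoked) before {\tt R'} is invoked.
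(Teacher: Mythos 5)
Your proposal is correct and follows essentially the same route as the paper's own proof: the same decomposition into the two clauses, the same reliance on Theorems~\ref{theorem:correctstabilize}, \ref{theorem:onlystabilize}, \ref{theorem:notconc}/Corollary~\ref{corollary:totalorder}, the same propagation argument (the first reader writes its $Inform\_Set$ to $R\_final$ before returning, the later reader's $Find\_Latest$ then selects a $\mapsto$-dominating element), and the same observation that $n>3t$ is consumed by the Genuine Advance Property (Theorem~\ref{theorem:advance}). Your packaging of the monotonicity-and-propagation invariant as an explicit auxiliary lemma, and your contrapositive treatment of the $v=v_0$ case, are slightly more detailed than the paper's exposition but do not constitute a different argument.
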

\begin{proof}
We show that the value returned by a read operation satisfies ``reading a current value'' and ``no new-old inversions''.
\begin{itemize}
\item {\bf Reading a current value:}
From the algorithm pseudo-code, a read {\tt R} returns the value $\langle k,u\rangle$ such that $\langle k,u\rangle.\overline{T}$ is the latest value ordered by $\mapsto$ that has stabilized, up until some point of time during {\tt R}. By Theorem~\ref{theorem:onlystabilize}, such a value must have been written by a correct write operation or by a pseudo-correct write operation that %meets the lower threshold of correctness (and is therefore pseudo-correct as it has stabilized). 
has stabilized.
From Theorem~\ref{theorem:correctstabilize}, a correct write always stabilizes before the write operation returns/completes. Therefore, $\langle k,u\rangle$ will be the value written by the most recent correct write operation that precedes {\tt R}, or by a later write operation. The later write operation may be (i) a correct write operation that overlaps with {\tt R}, or (ii) a pseudo-correct write operation that has stabilized (and which precedes or overlaps {\tt R}). Thus a current value is read.
\begin{comment}
There are two cases.
\begin{itemize}
\item $v$ was written by a correct write operation $w1$. From Theorem~\ref{theorem:correctstabilize}, a correct write always stabilizes before the write operation returns/completes.  If there were a later write operation $w2$ that completed before the read operation {\tt R} began, that $w2$ would have stabilized before the read operation began and would be returned by the read. Thus $w2$ distinct from $w1$ cannot exist, and $w1$ is the most recent correct write operation that precedes or overlaps with {\tt R}.
\item $v$ was written by a lower threshold write operation $w1$ that becomes successful. If there were a later correct write operation $wc$ that wrote $vc$ that preceded (or overlapped) {\tt R}, $wc$ will have higher witness timestamps and $v\mapsto vc$. $vc$ would have stabilized before $wc$ returned, and from the algorithm, {\tt R} would have returned $vc$ or a later stabilized value. Hence such a $wc$ cannot exist and $w1$ is later than the most recent correct write operation that precedes (or overlaps) {\tt R}. If $wc$ overlapped {\tt R} if $vc$ were returned by {\tt R}, a similar reasoning indicates that such a $wc$ cannot exist.
If $wc$ overlapped {\tt R} but $vc$ was not returned by {\tt R}, such a $wc$ may exist. $w1$ is still later than the most recent correct write operation that precedes {\tt R}.
\end{itemize}
\end{comment}
\item {\bf No ``new-old'' inversions:} Let read {\tt R} by $i$ return $\langle k,u\rangle$ and let read {\tt R'} by $j$ return $\langle k',u'\rangle$, where {\tt R} precedes {\tt R'}. {\tt R} will write $\langle k,u\rangle$ in $R\_final_{i*}$ before returning. {\tt R'} will read from $R\_final_{*j}$, add these elements to $Z$ and invoke $Find\_Latest(Z)$ which will return the most recent value $\langle k,u\rangle^{max}$ as per $\mapsto$. It is guaranteed that if $\langle k,u\rangle^{max}\neq \langle k,u\rangle$ then $\langle k,u\rangle.\overline{T}\mapsto \langle k,u\rangle^{max}.\overline{T}$ because all the values in $Z$ are totally ordered by $\mapsto$ (Theorem~\ref{theorem:notconc}, Corollary~\ref{corollary:totalorder}) and from Definition~\ref{definition:mapsto}, $\langle k,u\rangle.\overline{T}\mapsto \langle k,u\rangle^{max}.\overline{T}$ implies that the write of $\langle k,u\rangle^{max}$ had greater (or equal) witness timestamps than the write of $\langle k,u\rangle$ for the at least $n-2t$ common processes that witnessed both writes.
The value $\langle k',u'\rangle$ returned by {\tt R'} is $\langle k,u\rangle^{max}$. Thus there are no inversions.
\end{itemize}
The Genuine Advance Property implicitly needed requires $n>3t$ (Theorem~\ref{theorem:advance}).
\end{proof}

Let correct reader $i$ return $\langle k,u\rangle$ to {\tt R(i,1)} and return $\langle k',u'\rangle$ to {\tt R(i,2)}, where {\tt R(i,1)} precedes {\tt R(i,2)}. Let correct reader $j$ return $\langle k',u'\rangle$ to {\tt R(j,1)} and let it then issue {\tt R(j,2)}, where {\tt R(j,1)} precedes {\tt R(j,2)}. As the register is Byzantine linearizable (Theorem~\ref{theorem:lin}), from the values returned to $i$, we have $\langle k,u\rangle.\overline{T}\mapsto \langle k',u'\rangle.\overline{T}$. If {\tt R(j,2)} were to return $\langle k,u\rangle$, $\langle k',u'\rangle.\overline{T}\mapsto \langle k,u\rangle.\overline{T}$ which leads to a contradiction. Hence {\tt R(j,2)} must return $\langle k',u'\rangle$ or a value with a higher timestamp vector $\overline{T}$. Thus, the Total Ordering Property cannot be violated by the algorithm. 
This logic along with Theorem~\ref{theorem:advance} about the Genuine Advance Property, implicitly needed for linearization, gives the following corollary.

\begin{corollary}
\label{corollary:tocor}
The Total Ordering property (Definition~\ref{definition:to}) of values returned by correct readers is satisfied, provided $n>3t$.
\end{corollary}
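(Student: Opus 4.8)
The plan is to derive the Total Ordering Property as a consequence of two facts already in hand: the global total order $\mapsto$ on the timestamp vectors of all stabilized values (Corollary~\ref{corollary:totalorder}), and the {\em No ``new-old'' inversions} clause established for each individual correct reader in Theorem~\ref{theorem:lin}. Intuitively, every value that a correct reader can return is a stabilized value (Theorem~\ref{theorem:onlystabilize}), all stabilized values lie in one antisymmetric total order under $\mapsto$, and each correct reader's own sequence of returns is monotone in that order; so the relative order in which any two correct readers return the same pair of writes is forced to coincide with the single global $\mapsto$ order, and hence with each other.

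First I would fix two writes with stabilized values $\langle k,u\rangle$ and $\langle k',u'\rangle$ and two correct readers $i$ and $j$, each of which returns both values at some of its reads. By Theorem~\ref{theorem:onlystabilize} both values have stabilized, so by Corollary~\ref{corollary:totalorder} their partial timestamp vectors are $\mapsto$-comparable and $\mapsto$ is antisymmetric on them. Next I would invoke the No ``new-old'' inversions clause of Theorem~\ref{theorem:lin}: if reader $i$ returns $\langle k,u\rangle$ at a read preceding the read at which it returns $\langle k',u'\rangle$, then $\langle k,u\rangle.\overline{T}\mapsto\langle k',u'\rangle.\overline{T}$ (with timestamps taken equal when the values are equal). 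This is exactly the pre-corollary argument in the text: if $j$ were to return the two values in the opposite program order, the same clause applied to $j$ would yield $\langle k',u'\rangle.\overline{T}\mapsto\langle k,u\rangle.\overline{T}$, contradicting antisymmetry of the total order $\mapsto$. Hence $j$ must return them in the same relative order as $i$, which is precisely the Total Ordering Property.

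The hard part will be bookkeeping rather than conceptual: making precise that ``seen in a same common order'' refers to the program order of each reader's own read operations, and confirming that the No ``new-old'' inversions clause, stated for a read {\tt R} preceding a read {\tt R'}, lifts from adjacent reads to the arbitrary pair of reads at which a reader returns the two chosen values. This lift is immediate because $\mapsto$ is transitive within the total order of Corollary~\ref{corollary:totalorder}, so a correct reader's returns are monotone along $\mapsto$ for any pair of its reads, not merely consecutive ones. I would also note explicitly that the Genuine Advance Property, and hence the hypothesis $n>3t$ (Theorem~\ref{theorem:advance}), is what guarantees $\mapsto$ is a genuine total order in which distinct stabilized writes are separated, so that the antisymmetry step of the contradiction is sound; this is why the bound $n>3t$ appears in the statement.
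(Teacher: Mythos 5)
Your proposal is correct and takes essentially the same route as the paper: both apply the \emph{No ``new-old'' inversions} clause of Theorem~\ref{theorem:lin} to each correct reader separately and then derive a contradiction with the antisymmetry of the total order $\mapsto$ on stabilized values (Corollary~\ref{corollary:totalorder}) if a second reader were to return the two values in the opposite order. One small correction: the totality/antisymmetry of $\mapsto$ itself needs only $n>2t$ (Theorem~\ref{theorem:notconc}); the bound $n>3t$ enters, as the paper notes, because the Genuine Advance Property (Theorem~\ref{theorem:advance}) is implicitly required for the linearizability guarantee of Theorem~\ref{theorem:lin} that your argument invokes, not to make the antisymmetry step sound.
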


As stated in Section~\ref{section:modelprelim}, a SWMR register supports Byzantine linearizable executions because before every read operation of a correct process, one can add a corresponding Byzantine write in the linearization \cite{DBLP:conf/wdag/CohenK21}. Next, we elaborate on this to give a linearization of an execution to show that the SWMR Byzantine linearizable register we constructed supports Byzantine linearizability of executions. First, read operations of correct readers are linearized in the order of the return values, as per the $\mapsto$ relation. Then an update of a value $\langle k,v\rangle$ by the Byzantine writer is added just before the first read operation that reads that value. In general, the value returned by a read operation is based on the values written by one or more than one HLI write operation as the writer is Byzantine. Further, one HLI write operation by the Byzantine writer may result in %a different read value 
different read values
being returned by multiple correct readers. To differentiate among the multiple updates of different values over time to the SWMR register, % by a single HLI write operation, %of the Byzantine writer %which may write multiple values over time 
%before returning that operation, 
we treat each update, %by the HLI write operation, 
which has stabilized, as an independent {\em Byzantine (correct or pseudo-correct) write operation}, associated with its vector timestamp. %Thus a HLI write operation (by the Byzantine writer) can have multiple Byzantine write operations, which are the correct or pseudo-correct write operations and their vector timestemps, associated with it. 

\begin{theorem}
\label{theorem:blin}
The SWMR Byzantine linearizable register implemented by Algorithm~\ref{alg:lin} satisfies Byzantine linearizability of executions.
\end{theorem}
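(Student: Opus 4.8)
The plan is to reduce this statement to the register-linearizability result already proved in Theorem~\ref{theorem:lin}, by exhibiting, for an arbitrary execution $\sigma$ with history $H$, an explicit history $H'$ that witnesses Definition~\ref{definition:blin}. I would start from $H|_{correct}$, which consists of the correct readers' read operations together with the writer's write operations in case the writer is correct. By Theorem~\ref{theorem:onlystabilize} the writes that stabilize are exactly the correct and pseudo-correct writes, and by Corollary~\ref{corollary:totalorder} their partial timestamps $\langle k,u\rangle.\overline{T}$ are totally ordered by $\mapsto$; list this order as $W^1, W^2,\ldots$ with values $v^1, v^2,\ldots$. I would form $H'$ by augmenting $H|_{correct}$ with one \emph{Byzantine} write operation for every stabilized write not already present as a correct-writer operation, so that $H'|_{correct}=H|_{correct}$ holds by construction.

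Next I would define the witnessing sequential history $S$. The writes are sequenced strictly in the $\mapsto$ order $W^1, W^2,\ldots$ (not in physical stabilization order, which may differ because of the concurrency races that give some pseudo-correct writes an invisible linearization point). Each correct read {\tt R} is placed at the instant within its own interval at which it observes the latest $\mapsto$-stabilized value it returns; such an instant exists by the ``reading a current value'' argument inside the proof of Theorem~\ref{theorem:lin}. Each write $W^i$ returned to some correct read is inserted immediately before the first read that returns $v^i$, and reads returning the initial value $v_0$ are placed before $W^1$; writes never returned (including those with invisible linearization points) are placed at their $\mapsto$ position without disturbing any read. Because the added writes are Byzantine, their invocations and responses in $H'$ may be chosen freely, which accommodates the non-standard feature that a pseudo-correct write's linearization point can fall after its nominal HLI response.

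I would then verify the three clauses of Definition~\ref{definition:linearization} for $S$ relative to $H'$. Clause~1 holds by construction. For Clause~3 (sequential specification), ``reading a current value'' ensures that at {\tt R}'s chosen point the most recent $\mapsto$-preceding stabilized write is precisely the one {\tt R} returns, so {\tt R} reads the value of the write immediately preceding it in $S$, while reads returning $v_0$ precede all writes. For Clause~2 ($S$ preserves $\prec_{H'}$): if correct reads satisfy ${\tt R}\prec_H {\tt R'}$ and return $v^i$ and $v^j$, the ``no new-old inversions'' clause of Theorem~\ref{theorem:lin} gives $i\le j$, ordering their points consistently, and the Total Ordering Property (Corollary~\ref{corollary:tocor}) handles concurrent reads; when the writer is correct, Theorem~\ref{theorem:correctstabilize} guarantees each correct write stabilizes before it returns, so any read starting after a correct write completes returns that value or a later one, placing the write before the read in $S$, and the correct writer's sequential issue order coincides with $\mapsto$.

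The main obstacle is Clause~2: showing that the single total order $\mapsto$ on writes can be realized simultaneously with the real-time precedence of all correct operations, i.e., that the correct reads' linearization points lie inside their intervals yet stay monotone along $\mapsto$. This is exactly where ``no new-old inversions'' and the Genuine Advance Property (Theorem~\ref{theorem:advance}, requiring $n>3t$) are essential: as noted after Theorem~\ref{theorem:monocor}, without genuine advance the Byzantine readers could force an unbounded sequence of distinct values to stabilize and be returned, so that no single total write order would be compatible with the real-time order of correct reads. Granting $n>3t$, the needed compatibility of $\mapsto$ with $\prec_H$ on correct operations is inherited from Theorem~\ref{theorem:lin} and Corollary~\ref{corollary:tocor}, and the construction of $S$ then completes the argument.
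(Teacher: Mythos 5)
Your construction is correct and takes essentially the same route as the paper's own proof: both linearize the correct readers' reads consistently with per-reader local order and with the $\mapsto$ order on returned values, and then insert a Byzantine write of each returned value immediately before the first read in that sequence that returns it. The additional care you take (carrying never-returned stabilized writes into $H'$, treating the correct-writer case, and checking the clauses of Definition~\ref{definition:linearization} explicitly) only makes rigorous what the paper's proof leaves implicit.
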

\begin{proof}
%Define a linearization $L_W$ of the write operations by the single writer in the order in which the write operations are invoked by that writer.
Let $L_R$ be a linearization of the correct readers' read operations such that (i) the local order of reads at each such reader is preserved, and (ii) if read {\tt R} returns $\langle k,v\rangle$, read {\tt R'} returns $\langle k',v'\rangle$, and $\langle k,v\rangle.\overline{T}$ $\mapsto$ $\langle k',v'\rangle.\overline{T}$ then {\tt R} precedes {\tt R'} in $L_R$.

When $\langle k,v\rangle$ is returned by a read, the latest write operation to which any of the witness timestamps in $\langle k,v\rangle.\overline{T}$ belongs is denoted as $latest\_W({\tt R})$.  In the linearization $L_R$, place a (Byzantine) write operation writing a value $\langle k,v\rangle$ and assigned timestamp $\langle k,v\rangle.\overline{T}$ immediately before the first read operation {\tt R} in $L_R$ that reads $\langle k,v\rangle$ -- this is one of the possibly multiple Byzantine write operations corresponding to the write operation $latest\_W({\tt R})$.
%the read operations of $L_R$ such that a read operation {\tt R} has its linearization point immediately after the linearization point of $latest\_W({\tt R})$ while also respecting the linearization order in $L_R$.

The resulting linearization is seen to be a Byzantine linearization that considers all the read operations of correct readers, and includes some Byzantine write operations. In the extreme case, before every correct read operation in $L_R$, we add a corresponding Byzantine write operation.
\end{proof}

\begin{comment}
\begin{proof}
Define a linearization $L_W$ of the write operations by the single writer in the order in which the write operations are invoked by that writer.
Let $L_R$ be a linearization of the correct readers' read operations such that (i) the local order of reads at each such reader is preserved, and (ii) if read {\tt R} returns $\langle k,v\rangle$, read {\tt R'} returns $\langle k',v'\rangle$, and $\langle r,v\rangle.\overline{T}$ $\mapsto$ $\langle k',v'\rangle.\overline{T}$ then {\tt R} precedes {\tt R'} in $L_R$.

In general, the value returned by a read operation is based on the values written by one or more than one write operation as the writer is Byzantine. When $\langle k,v\rangle$ is returned by a read, the latest write operation to which any of $\langle k,v\rangle.\overline{T}$ belongs is denoted as $latest\_W({\tt R})$.  In the linearization $L_W$, place the read operations of $L_R$ such that a read operation {\tt R} has its linearization point immediately after the linearization point of $latest\_W({\tt R})$ while also respecting the linearization order in $L_R$.

The resulting linearization is seen to be a Byzantine linearization that considers all the read operations of correct readers and the write operations of the Byzantine writer. 
\end{proof}
\end{comment}

Additionally, the algorithm satisfies Monotonicity/Total Order of Stabilized Vector Timestamps (Def.~\ref{definition:monotonicity} \& Theorem~\ref{theorem:monocor}, $n>2t$), Genuine Advance of Timestamps (Def.~\ref{definition:advancetv} \& Theorem~\ref{theorem:monocor}, $n>3t$), View Consistency (Def.~\ref{definition:view} \& Lemma~\ref{lemma:viewconsistency}, $n>2t$), Total Ordering (Def.~\ref{definition:to} \& Corollary~\ref{corollary:tocor}, $n>3t$).

\subsection*{Space Complexity}
The algorithm uses $3n^2$ shared SWSR registers: $n^2$ $R\_witness$ registers of size $O(1)$, $n^2$ $R\_inform$ registers of size $O(n)$, $n^2$ $R\_final$ registers of size $O(n^2)$. It also uses $2n^2$ shared SWSR registers: $n$ $R\_init$ registers of size $O(1)$, and $n$ $R\_ack$ registers of size $O(1)$.

The local space at each reader process can be seen to be $O(n^2)$. 

\section{Conclusions}
\label{section:conc}
This paper studied Byzantine tolerant construction of a SWMR atomic register from SWSR atomic registers. It is the first to propose a definition of Byzantine register linearizability by non-trivially taking into account Byzantine behavior of the writer and readers, and by overcoming the drawbacks of the definition used by previous works. 
We introduced the concept of a correct write operation by a Byzantine writer. We also introduced the notion of a pseudo-correct write operation by a Byzantine writer, which has the effect of a correct write operation. Only correct and pseudo-correct writes may be returned by correct readers. The correct and pseudo-correct writes are totally ordered by their linearization points and this order is the total order in logical time in which the writes were performed. %, and which order respects some total order (in logical time) %any partial order in which the writes were performed. 
We then gave an algorithm to construct a Byzantine tolerant SWMR atomic register from SWSR atomic registers that meets our definition of Byzantine register linearizability.

\bibliographystyle{plain}
\bibliography{references}

\end{document}